\newcommand{\ignore}[1]{}
\newcommand{\nolabel}[1]{}
\renewcommand{\geq}{\geqslant}
\renewcommand{\leq}{\leqslant}
\tikzset{>={Latex[length=8pt, width=4pt]}}
\newcommand{\ThemeRed}{BrickRed}
\newcommand{\class}{\mathbb{C}}
\newcommand{\Durations}{\mathcal{D}}
\newcommand{\set}[1]{\left\{#1\right\}}
\newcommand{\smallset}[1]{\{#1\}}
\newcommand{\Bigset}[1]{\Big\{#1\Big\}}
\newcommand{\cset}[2]{\left\{#1 \,:\, #2\right\}}
\newcommand{\smallcset}[2]{\{#1 \,:\, #2\}}
\newcommand{\bigcset}[2]{\big\{#1 \,:\, #2\big\}}
\newcommand{\Bigcset}[2]{\Big\{#1 \,:\, #2\Big\}}
\newcommand{\slfrac}[3][\big]{#2#1/#3}
\newcommand{\ceiling}[1]{\left\lceil #1 \right\rceil}
\newcommand{\floor}[1]{\left\lfloor #1 \right\rfloor}
\let\Pr\relax
\DeclareMathOperator{\Pr}{\mathsf{P}} % Probability P.
\DeclareMathOperator{\EV}{\mathsf{E}} % Expected value.
\DeclareMathOperator{\Var}{\mathsf{Var}} % Variance.
\DeclareMathOperator{\Cov}{\mathsf{Cov}} % Covariance.
\DeclareMathOperator{\SE}{\mathsf{SE}} % Standard error.
\DeclareMathOperator{\LCPFA}{\mathsf{LPFA}} % Local conditional probability of false alarm.
\DeclareMathOperator{\LUPFA}{\mathsf{LPFA}^\star} % Local unconditional probability of false alarm.
\DeclareMathOperator{\LPFA}{\mathsf{LPFA}}   % Local probability of detection.
\DeclareMathOperator{\LPD}{\mathsf{LPD}}   % Local (conditional) probability of detection.
\DeclareMathOperator{\ARL}{\mathsf{ARL}}   % Average run length.
\DeclareMathOperator*{\esssup}{ess\,sup}
\DeclareMathOperator*{\essinf}{ess\,inf}
\DeclareMathOperator{\One}{\mathchoice{\rm 1\mskip-4.2mu l}{\rm 1\mskip-4.2mu l}{\rm 1\mskip-4.6mu l}{\rm 1\mskip-5.2mu l}}
\newcommand{\CUSUM}{\textsf{CS}}
\newcommand{\WLCUSUM}{\textsf{WL}}
\newcommand{\FMA}{\textsf{FMA}}
\newcommand{\MFMA}{\textsf{mFMA}}
\newcommand{\cF}{\mathcal{F}}
\newcommand{\cN}{\mathcal{N}}
\newcommand{\wtV}{\widetilde{V}}
\newcommand{\wtT}{\widetilde{T}}
\newcommand{\bY}{\mathbf{Y}}
\theoremstyle{plain}% Theorem-like structures provided by amsthm.sty
\newtheorem{lemma}{Lemma}
\newtheorem*{lemma*}{Lemma}
\newtheorem*{theorem*}{Theorem}
\newtheorem*{corollary*}{Corollary}
\newtheorem{proposition}{Proposition}
\newtheorem*{proposition*}{Proposition}
\theoremstyle{remark}
\newtheorem*{assumption*}{Assumption}
\theoremstyle{definition}
\newtheorem*{definition*}{Definition}
\DeclareFontFamily{U}{matha}{\hyphenchar\font45}
\DeclareFontShape{U}{matha}{m}{n}{
      <5> <6> <7> <8> <9> <10> gen * matha
      <10.95> matha10 <12> <14.4> <17.28> <20.74> <24.88> matha12
      }{}
\DeclareSymbolFont{matha}{U}{matha}{m}{n}
\DeclareMathSymbol{\abscont}{3}{matha}{"21}
\begin{document}

\title{Detecting an Intermittent Change of Unknown Duration}

\author{
    \name{Grigory Sokolov\textsuperscript{a}, Valentin~S. Spivak\textsuperscript{b}, and Alexander~G. Tartakovsky\textsuperscript{c}}
    \affil{%
        \textsuperscript{a}Xavier University, Cincinnati, OH, USA; %
        \textsuperscript{b}Space Informatics Laboratory, Moscow Institute of Physics and Technology, Dolgoprudny, Moscow Region, Russia; %
        \textsuperscript{c}AGT StatConsult, Los Angeles, California, USA}
}

\maketitle

\begin{abstract}
Oftentimes in practice, the observed process changes statistical properties at an unknown point in time and the duration of a change is substantially finite, in which case one says that the change is intermittent or transient.
We provide an overview of existing approaches for intermittent change detection and advocate in favor of a particular setting driven by the intermittent nature of the change. We propose a novel optimization criterion that is more appropriate for many applied areas such as the detection of threats in physical-computer systems, near-Earth space informatics, epidemiology, pharmacokinetics, etc.
We argue that controlling the local conditional probability of a false alarm, rather than the familiar average run length to a false alarm, and maximizing the local conditional probability of detection is a more reasonable approach versus a traditional quickest change detection approach that requires minimizing the expected delay to detection.
We adopt the maximum likelihood (ML) approach with respect to the change duration and show that several commonly used detection rules (CUSUM, window-limited CUSUM, and FMA) are equivalent to the ML-based stopping times. We discuss how to choose design parameters for these rules and provide a comprehensive simulation study to corroborate intuitive expectations.
\end{abstract}

\begin{keywords}
Intermittent change detection; change-point detection; sequential detection; window-limited CUSUM; FMA
\end{keywords}

%%Sec
\section{Introduction} \label{sec:intro}

The problem of detecting intermittent (or transient) changes is motivated by a variety of applications such as
aerospace navigation and flight systems integrity monitoring~\cite[Ch 11]{TNB_book2014},
cyber-security~\cite{Debaretal-CN99,EllisSpeed-Book01,Kent,peng-lncs04,Tartakovskyetal-SM06,Tartakovskyetal-IEEESP06,Tartakovsky-Cybersecurity14},
identification of terrorist activity~\cite{Raghavanetal-AoAS2013},
industrial monitoring~\cite{duncan-book86,Jeskeetal-ASMBI2018},
air pollution monitoring~\cite{NikiforovIFAC2022},
radar, sonar, and electrooptics surveillance systems~\cite{BarshalomLi93,Black1,Jeskeetal-WileyRef2018,Tartakovsky-IEEEASC2002,Tartakovsky&Brown-IEEEAES08}, \cite[Ch 8]{Tartakovsky_book2020}.
As a result, it has been of interest to many practitioners for some time.
However, in contrast to classical quickest change-point detection, where one's aim is to detect persistent changes in the distribution of the signal as quickly as possible, minimizing the expected delay to detection assuming the change is in effect (see, e.g., \cite{TNB_book2014,Tartakovsky_book2020} and references therein) a lot fewer publications are devoted to the sequential detection of intermittent changes \cite{Wang+Willett:2005a,Wang+Willett:2005b,Ortner+Nehorai:2007,Nikiforov+et+al:2012,Moustakides:2014,Nikiforov+et+al:2017,Rovatsos+Zou+Veeravalli:2017,Nikiforov+et+al:2023}.

As opposed to the classical change-point detection problem, transient changes only last for a finite (often short) time. Several main scenarios motivate intermittent change detection.
In one of them, the under-change mode lasts for a finite and unknown (possibly random) time and can be detected with a certain delay even after it ends. Examples of such scenarios may be air pollution, water contamination, and pharmacokinetics/dynamics. In such a case, the standard approach to detecting changes that prescribes minimizing an average detection delay with a given false alarm rate may be appropriate.
Another scenario is when a change is associated with a critical anomaly like a threat (a terrorist physical or computer attack, military targets/threats, etc.) that appears and disappears at unknown points in time and should be detected not only as soon as possible but with a delay not bigger than a prescribed value.
In this second case, if the change is detected with a delay larger than a given time-to-alert, then it is considered completely missed.
For example, if a threat hits an object, then it is too late to detect it.
In this case, a conventional quickest change detection problem setup is not appropriate; rather, one aims to maximize the probability of detection within a prescribed time (or space) interval for a given false alarm rate.
There is also a third scenario where the change may last very long but still it may be required to detect it in a fixed relatively short interval while maximizing the probability of detection.
An example is target track initiation which should be performed in a very short interval, while tracks last a very long time and should be estimated with maximal possible accuracy (see, e.g., \cite{SpivakTar_EnT2020}).
Therefore, in some cases, taking the quickest detection approach (\cite{Ebrahimzadeh:2015,Zou+Fellouris+Veeravalli:SIT:2017, Zou+Fellouris+Veeravalli:TIT:2017, Rovatsos+Zou+Veeravalli:2017}) may seem reasonable;
in other cases, it can be argued (\cite{Broder+Schwartz:1990,Mei-SQA08, Tartakovsky-SQA08a,Bakhache+Nikiforov:2000,TNB_book2014,Tartakovsky_book2020}) that a substantially different criterion is called for.
Specifically, one's objective would be to develop and study detection rules that locally maximize the probability of detection (rather than its speed) subject to a constraint on the local probability of false alarm. As in \cite{Tartakovsky_book2020,Tartakovsky+et+al:2021}, we call this approach \emph{reliable} change-point detection.

In this work, we consider the case that the change duration is unknown (focusing on the case that it is deterministic) and examine how the maximum likelihood ratio approach for various degrees of information on the duration of the intermittent change yields some well-known change detection rules.
Specifically, we consider three detection procedures: Cumulative Sum (CUSUM) procedure, window-limited CUSUM, and finite moving average (FMA) procedure, and compare their performance in terms of the probability of detection versus the probability of false alarm.

This paper continues the research on reliable change detection along the lines started by \cite{Nikiforov+et+al:2017,Tartakovsky_book2020, BerenkovTarKol_EnT2020, Nikiforov+et+al:2023} and its contribution is as follows:

\begin{enumerate}
    \item We propose a new optimality criterion for reliable change detection of intermittent changes that maximizes the local conditional probability of detection for a given local conditional probability of false alarm (PFA).

    \item We show that the conditional PFA criterion is the most stringent compared to other popular ones such as controlling average run length (ARL) to false alarm or unconditional PFA.

    \item We use the maximum likelihood (ML) principle in the context of intermittent change detection to derive three popular rules, propose a modification of the FMA detection algorithm driven by its ML origins and show that it has better operating characteristics compared to the standard one.

    \item We propose ways to design the aforementioned rules. In particular, for CUSUM we adapt the integral equations framework and develop efficient numerical methods that allow for an almost precise evaluation of its operating characteristics. For window-limited rules, we provide the means to control (upper-bound) their local conditional PFA and perform a simulation study to examine its accuracy.

    \item We validate our analysis and compare CUSUM, window-limited CUSUM, and two FMA algorithms through numerical results.
\end{enumerate}

The rest of the paper is organized as follows.
In Section~\ref{sec:formulations}, we discuss various formulations of transient change detection problems and how they relate to each other. We propose one particular formulation and lay the common ground for the remainder of the paper.
In Section~\ref{sec:candidates}, we take the maximum likelihood approach to derive three rules commonly seen in literature in the context of detecting transient changes: CUSUM, window-limited CUSUM, and FMA. We further propose a modification of FMA driven by the maximum-likelihood nature of this rule.
In Section~\ref{sec:procedure_design}, we propose methods for the choice of design parameters for the rules under investigation, each driven by the specific nature of its detection statistic.
In Section~\ref{sec:simulations}, we perform a numerical study for CUSUM, and a Monte Carlo simulation for window-limited CUSUM and FMA, to assess their operating characteristics and the accuracy of the theoretical bounds/approximations.
Lastly, in Section~\ref{sec:conclusions} we draw conclusions and discuss other avenues for advancing the field of transient change detection.

%%Sec
\section{Change detection formulations} \label{sec:formulations}

Before we proceed to address the optimization problem associated with intermittent change detection, we provide a brief overview of different ways to look at a large class of change detection criteria.
We investigate the differences and common features of various approaches and justify why we focus on a particular one.

%%Subsec
\subsection{The general model}

We start with a sequence of observations $Y_1, Y_2, \dots$.
At some unknown moment in time $\nu \in \{0,1,2,\dots\}$ the distribution of the observations may undergo a change that lasts for an unknown time $N$ (random or deterministic).
More specifically, $\nu$ denotes the serial number of the last pre-change observation, i.e., $Y_{\nu + 1}$ is the first under-change observation and $Y_{\nu + N}$ is the last under-change observation.
Let $\bY^k = (Y_1, \cdots, Y_k)$ denote the first $k$ observations and $\cF_k = \sigma(\bY^k)$ denote the corresponding filtration.

The joint density $p_\nu(\cdot \mid N = n)$ of the first $k$ observations is of the form
\begin{align} \label{eq:model:general}
\begin{split}
    p_\nu(\bY^k \mid N = n) &= \prod_{t = 1}^{k} g(Y_t \mid \bY^{t - 1})
        \quad \text{for $k \leq \nu$}, \\
    p_\nu(\bY^k \mid N = n) &= \prod_{t = 1}^{\nu} g(Y_t \mid \bY^{t - 1})
        \times \prod_{t = \nu + 1}^{k} f(Y_t \mid \bY^{t - 1})
        \quad \text{for $\nu < k \leq \nu + n$}, \\
    p_\nu(\bY^k \mid N = n) &= \prod_{t = 1}^{\nu} g(Y_t \mid \bY^{t - 1})
        \times \prod_{t = \nu + 1}^{\nu + n} f(Y_t \mid \bY^{t - 1})
        \times \!\!\! \prod_{t = \nu + n + 1}^{k} g(Y_t \mid \bY^{t - 1}) \\
        & \hspace{20em} \text{for $k > \nu + n$}, %\nonumber
\end{split}
\end{align}
where $g(Y_t \mid \bY^{t - 1})$ is the pre-change and $f(Y_t \mid \bY^{t - 1})$ is the under-change conditional densities, respectively.

In what follows, we write $\Pr_{\nu}(\cdot \mid N = n)$ for the probability measure for the whole sequence $\{Y_t, t \geq 1\}$ for each last pre-change index $\nu \geq 0$ and change duration $N=n$, under which the observed sequence has density $p_\nu(\cdot \mid N = n)$ defined in \eqref{eq:model:general}.
Let $\EV_{\nu}(\cdot \mid N = n)$ denote the corresponding expectation.
We write $\Pr_\infty$ and $\EV_\infty$ for the probability measure and the corresponding expectation under the assumption that the change never happens ($\nu = \infty$).

A change detection procedure $T$ is a stopping time with respect to the filtration $\{\cF_k\}_{k \geq 1}$, i.e., $\{T = k\} \in \cF_k$. The ultimate goal is to find a
stopping time that is ``best'' at detecting the intermittent change subject to certain constraints.
We explore various approaches for measuring false alarms and correct detection in the next subsections to explain what exactly we understand by ``best'' and to properly formulate the optimization problem.

%%Subsec
\subsection{Specific i.i.d.\ formulation} \label{sec:formulations:specifics}

Throughout the rest of the paper, we consider the case that observations are independent and each observation shares the same density $g$ under the nominal regime and a different density $f$ when the change is in effect (see the diagram below).
\[
    \underbrace{Y_1, \cdots , Y_\nu }_{\text{i.i.d., $g$}}, ~
    \underbrace{Y_{\nu + 1}, \cdots , Y_{\nu + N}}_{\text{i.i.d., $f$}}, ~
    \underbrace{Y_{\nu + N + 1}, \cdots }_{\text{i.i.d., $g$}}.
\]
In this special case, the general non-i.i.d.\ model given in \eqref{eq:model:general} takes the form
\begin{align} \label{eq:model:iid}
\begin{split}
    p_\nu(\bY^k \mid N = n) &= \prod_{t = 1}^{k} g(Y_t)
        \quad \text{for $k \leq \nu$}, \\
    p_\nu(\bY^k \mid N = n) &= \prod_{t = 1}^{\nu} g(Y_t)
        \times \prod_{t = \nu + 1}^{k} f(Y_t)
        \quad \text{for $\nu < k \leq \nu + n$}, \\
    p_\nu(\bY^k \mid N = n) &= \prod_{t = 1}^{\nu} g(Y_t)
        \times \prod_{t = \nu + 1}^{\nu + n} f(Y_t)
        \times \!\!\! \prod_{t = \nu + n + 1}^{k} g(Y_t)
        \quad \text{for $k > \nu + n$},
\end{split}
\end{align}
so that under the probability measure $\Pr_{\nu}(\cdot \mid N = n)$ observations $Y_1,\dots, Y_\nu$ and $Y_{\nu + N + 1}, Y_{\nu + N + 2}, \dots$ are i.i.d.\ with density $g$ and independent of $Y_{\nu + 1}, \dots, Y_{\nu + N}$, which are i.i.d.\ with density $f$.

The stopping times we consider are based on the maximal likelihood ratio approach. To facilitate their presentation, introduce the instantaneous likelihood ratios (LR)
\begin{align} \label{eq:instant_LR}
    \Lambda_k = f(Y_k) / g(Y_k) \qquad \text{for $k = 1, 2, \cdots $},
\end{align}
and the instantaneous log-likelihood ratios $\lambda_k = \log(\Lambda_k)$.

Although problem formulation in Sections \ref{sec:formulations:classes} and \ref{sec:formulations:optimization}, as well as the definition of the stopping times in Section \ref{sec:candidates}, applies to the general model \eqref{eq:model:general}, the specific form of their detection statistics as well as results presented in Sections \ref{sec:procedure_design} and \ref{sec:simulations} assume the i.i.d.\ model \eqref{eq:model:iid}.

%%Subsec
\subsection{Classes of procedures} \label{sec:formulations:classes}

For the comparison between different rules to be fair, one restricts oneself to a certain class of procedures. We review three such classes.
Before we proceed, we introduce three conventional performance measures for an arbitrary stopping time $T$ under the no-change assumption: (i) average run length to false alarm ($\ARL$); (ii) local unconditional probability of false alarm ($\LUPFA_m$); and (iii) local conditional probability of false alarm ($\LCPFA_m$).
Specifically,
\begin{align*}
    \ARL(T) &= \EV_\infty (T), \\
    \LUPFA_m (T) &= \sup_{\ell \geq 0} \Pr_\infty (\ell < T \leq \ell + m), \\
    \LCPFA_m (T) &= \sup_{\ell \geq 0} \Pr_\infty (T \leq \ell + m \mid T > \ell).
\end{align*}

The first, average run length to false alarm ($\ARL$), commonly appears in persistent change-point detection ($N = \infty$) when one searches for an optimal stopping time in the class of stopping times
\begin{align} \label{eq:class:arl}
    \class_\gamma = \cset{T}{\ARL(T) \geq \gamma},
\end{align}
for some $\gamma \geq 1$.
However, this class is appropriate only if the $\Pr_\infty$-distribution of the stopping time $T$ is close to geometric, and may become inappropriate for intermittent change detection (see, e.g., \cite{Mei-SQA08, Tartakovsky-SQA08a,TNB_book2014,Tartakovsky_book2020}).

The following alternatives address this issue by focusing on constraining local probability of false alarm (unconditional and conditional, respectively):
\begin{align}
    \class^\star (m, \alpha) &= \cset{T}{\LUPFA_m (T) \leq \alpha}, \label{eq:class:lpfa_unconditional} \\
    \class (m, \alpha) &= \cset{T}{\LCPFA_m (T) \leq \alpha}, \label{eq:class:lpfa_conditional}
\end{align}
for some $m \geq 1$ and $0 < \alpha < 1$.

Let us look into the relationship between the three classes.
It turns out that all three are essentially different, with $\class_\gamma$ being the least restrictive and $\class(m, \alpha)$ the most stringent.

%%Proposition
\begin{proposition}
    $\class(m, \alpha)$ is more stringent than $\class^\star (m, \alpha)$.
    In particular, if $T \in \class(m, \alpha)$, then $T \in \class^\star (m, \alpha)$
\end{proposition}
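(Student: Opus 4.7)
The plan is to establish the inclusion $\class(m,\alpha) \subseteq \class^\star(m,\alpha)$ directly by comparing $\LUPFA_m(T)$ and $\LCPFA_m(T)$ pointwise in $\ell$ and then taking suprema.

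First I would rewrite the unconditional local false alarm probability using the definition of conditional probability: for every $\ell \geq 0$,
\begin{equation*}
    \Pr_\infty(\ell < T \leq \ell + m) = \Pr_\infty(T > \ell)\,\Pr_\infty(T \leq \ell + m \mid T > \ell),
\end{equation*}
with the convention that the right-hand side is zero whenever $\Pr_\infty(T > \ell) = 0$ (in which case the left-hand side also vanishes, so the identity is trivially valid). This is the only real ``step''; everything else will be monotonicity.

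Next, since $\Pr_\infty(T > \ell) \leq 1$, the displayed identity immediately yields the pointwise bound
\begin{equation*}
    \Pr_\infty(\ell < T \leq \ell + m) \leq \Pr_\infty(T \leq \ell + m \mid T > \ell).
\end{equation*}
Taking the supremum over $\ell \geq 0$ on both sides preserves the inequality, giving $\LUPFA_m(T) \leq \LCPFA_m(T)$. Hence, if $T \in \class(m,\alpha)$ so that $\LCPFA_m(T) \leq \alpha$, then $\LUPFA_m(T) \leq \alpha$ as well, i.e., $T \in \class^\star(m,\alpha)$.

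There is no real obstacle here; the only thing worth being careful about is the degenerate case $\Pr_\infty(T > \ell) = 0$, which is handled by the standard convention that the conditional probability can be taken to be anything in $[0,1]$ (or is simply undefined and excluded from the supremum) while the joint probability is zero. Strictness of the inclusion (the word ``more stringent'' in the statement) does not actually need to be proved, but it could be illustrated by any $T$ with $\Pr_\infty(T > \ell) < 1$ for some $\ell$, in which case the inequality above is strict at that $\ell$.
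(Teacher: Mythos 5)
Your proof of the inclusion $\class(m,\alpha) \subseteq \class^\star(m,\alpha)$ is correct and is essentially the paper's own argument: both rest on the single observation that
$\Pr_\infty(\ell < T \leq \ell + m) = \Pr_\infty(T > \ell)\,\Pr_\infty(T \leq \ell + m \mid T > \ell) \leq \Pr_\infty(T \leq \ell + m \mid T > \ell)$
for every $\ell$, followed by taking suprema. That part needs no changes.

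However, the paper reads ``more stringent'' as strictly stronger than the bare inclusion, and devotes the second half of its proof to it: it shows that the containment cannot be reversed even approximately, i.e., $\class^\star(m,\alpha)$ is not covered by $\class(m',\alpha')$ for \emph{any} choice of $m'\geq 1$ and $0<\alpha'<1$. This is done with an explicit separating example: a stopping time $T_r$ uniform on $\{1,\dots,K\}$ with $K=\ceiling{m/\alpha}$ and a geometric tail with parameter $r$ beyond $K$. Such a $T_r$ lies in $\class^\star(m,\alpha)$ for every $r\in(0,1)$, yet $\LCPFA_{m'}(T_r) \geq 1-(1-r)^{m'}$, which exceeds any prescribed $\alpha'$ once $r$ is close enough to $1$. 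Your closing remark --- that strictness could be ``illustrated'' by any $T$ with $\Pr_\infty(T>\ell)<1$ at some $\ell$ --- does not substitute for this: a strict pointwise gap between $\LUPFA_m(T)$ and $\LCPFA_m(T)$ for one particular $T$ says nothing about whether some member of $\class^\star(m,\alpha)$ escapes every conditional class. If the proposition is read as only its ``in particular'' clause, your argument is complete; if it is read as the paper intends, the separating example is the missing ingredient.
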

\begin{proof}
    It is not hard to see that $\class(m, \alpha) \subseteq \class^\star (m, \alpha)$ for every $m \geq 1$ and $0 < \alpha < 1$ since $\Pr_\infty(T \leq \ell + m \mid T > \ell) \geq \Pr_\infty(\ell < T \leq \ell + m)$ for all $\ell > 0$.

    Suppose now that $\class^\star (m, \alpha)$ can be covered by $\class(m', \alpha')$ for some $m' \geq 1$ and $0 < \alpha' < 1$.
    Consider a stopping time $T_r$ such that
    \begin{alignat*}{2}
        \Pr_\infty(T_r = i) &= 1 / (K + 1) \quad && \text{for $1 \leq i \leq K$}, \\
        \Pr_\infty(T_r = K + j) &= r \, (1 - r)^{j - 1} / (K + 1) \quad && \text{for $j \geq 1$},
    \end{alignat*}
    where $K = \ceiling{m / \alpha}$. Clearly, $T_r \in \class^\star (m, \alpha)$ for any $r$, $0 < r < 1$.
    However,
    \begin{align*}
        \LCPFA_{m'} (T_r) &\geq \sup_{\ell \geq K} \Pr_\infty(T_r \leq \ell + m' \mid T_r > \ell)
            = 1 - (1 - r)^{m'},
    \end{align*}
    and $r$ can be chosen so that $\LCPFA_{m'} (T_r) > \alpha'$, which leads to a contradiction.
\end{proof}

Now, when we consider $\class_\gamma$, it is intuitively obvious that large values of the ARL to false alarm do not necessarily guarantee small values of the maximal probability of false alarm, which has been discussed in \cite{Tartakovsky-SQA08a,TNB_book2014,Tartakovsky_book2020,Tartakovsky+et+al:2021}.
The following two propositions further highlight this relationship.

%%Proposition
\begin{proposition}
    $\class(m, \alpha)$ is more stringent than $\class_\gamma$.
    In particular, if $T \in \class(m, \alpha )$, then $T \in \class_\gamma$ for
    \begin{align} \label{eq:arl_bound:gamma_from_m_alpha}
        \gamma - 1 = m \left( \frac{1}{\alpha} - 1\right).
    \end{align}
\end{proposition}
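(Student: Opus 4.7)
The plan is to convert the hypothesis $T \in \class(m, \alpha)$ into a geometric decay bound on the survival function $p_\ell := \Pr_\infty(T > \ell)$, which satisfies $p_0 = 1$, and then sum. Since by assumption
$$
\frac{p_\ell - p_{\ell + m}}{p_\ell} = \Pr_\infty(T \leq \ell + m \mid T > \ell) \leq \alpha
$$
for every $\ell \geq 0$ (with the ratio read as zero when $p_\ell = 0$), the one-step estimate $p_{\ell + m} \geq (1-\alpha)\,p_\ell$ follows, and iteration produces $p_{\ell + jm} \geq (1-\alpha)^j\, p_\ell$ for all $j \geq 0$ and $\ell \geq 0$.

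Next, rewriting $\EV_\infty(T) = \sum_{k=0}^\infty p_k$ by grouping indices $k$ according to their residue modulo $m$ gives
$$
\EV_\infty(T) = \sum_{\ell = 0}^{m - 1} \sum_{j = 0}^\infty p_{\ell + jm} \geq \sum_{\ell = 0}^{m - 1} p_\ell \sum_{j = 0}^\infty (1 - \alpha)^j = \frac{1}{\alpha} \sum_{\ell = 0}^{m-1} p_\ell.
$$
The remaining task is to bound $\sum_{\ell = 0}^{m - 1} p_\ell$ from below. For this the key inputs are $p_0 = 1$ (since $T \geq 1$) and, for $1 \leq \ell \leq m - 1$, the monotonicity $p_\ell \geq p_m$ combined with the base case $p_m \geq (1-\alpha) p_0 = 1 - \alpha$ of the one-step estimate. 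Together these yield $\sum_{\ell = 0}^{m-1} p_\ell \geq 1 + (m-1)(1-\alpha)$, and a short algebraic simplification reduces $(1 + (m-1)(1-\alpha))/\alpha$ to $1 + m(1/\alpha - 1) = \gamma$, as required.

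The main obstacle is obtaining the exact constant. A naive uniform bound $p_\ell \geq 1 - \alpha$ throughout the first block $0 \leq \ell \leq m - 1$ would already produce $\EV_\infty(T) \geq m(1-\alpha)/\alpha = \gamma - 1$, which is off by one. Exploiting the exact value $p_0 = 1$ rather than the weaker $p_0 \geq 1 - \alpha$ closes precisely this unit gap and yields the claimed $\gamma$; the degenerate case where $p_\ell = 0$ for some $\ell$ (so $T$ is a.s.\ bounded) makes the one-step inequality hold trivially and requires no separate argument.
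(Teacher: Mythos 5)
Your argument is correct and is essentially the paper's own proof: both decompose $\EV_\infty(T)=\sum_{\ell\ge 0}\Pr_\infty(T>\ell)$ by residue classes modulo $m$, iterate the one-block bound $\Pr_\infty(T>\ell+m\mid T>\ell)\ge 1-\alpha$ to get geometric decay contributing the factor $1/\alpha$, and then bound the first block by $1+(m-1)(1-\alpha)$ using $\Pr_\infty(T>0)=1$ together with $\Pr_\infty(T>\ell)\ge\Pr_\infty(T>m)\ge 1-\alpha$, which is exactly the paper's $\alpha+m(1-\alpha)$. The only piece you omit is the second half of the paper's proof of ``more stringent'': a counterexample (a shifted geometric stopping time $T_r=\ceiling{\gamma}+G_r$) showing that no class $\class(m,\alpha)$ covers $\class_\gamma$, so the containment is genuinely one-sided; if the proposition is read as asserting strictness and not merely the inclusion with constant \eqref{eq:arl_bound:gamma_from_m_alpha}, you would need to add that construction.
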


\begin{proof}
    Fix $m \geq 1$ and $0 < \alpha < 1$, and consider an arbitrary stopping time $T \in \class (m, \alpha)$:
    \begin{align} \label{eq:lemma:assumption}
        \sup _{\ell \geq 0} \Pr _\infty (T \leq \ell + m \mid T > \ell ) \leq \alpha .
    \end{align}
    Our goal is to show that $\EV _\infty (T) \geq \gamma$ for some $\gamma = \gamma(m, \alpha)$.
    With very small loss of generality consider only stopping times that are almost surely positive, i.e., $\Pr_\infty(T=0)=0$.

    Using induction
    \begin{align*}
        \Pr _\infty (T > i + k \, m \mid T > i)
        &= \Pr _\infty \left(T > [i + (k - 1) \, m] + m \mid T > [i + (k - 1) \, m]\right) \\
        &\quad \times \Pr _\infty \left(T > i + (k - 1) \, m \mid T > i + (k - 1) \, m\right),
    \end{align*}
    we obtain
    \begin{align} \label{eq:lemma:conditional_sequencing}
        \begin{split}
            \Pr _\infty (T > i + k \, m \mid T > i)  = \prod_{j = 0}^{k - 1} \Pr _\infty \left(T > [i + j m] + m \mid T > [i + j m]\right) \geq (1-\alpha)^k,
        \end{split}
    \end{align}
    where the last inequality follows from the fact that by \eqref{eq:lemma:assumption}
    \[
        \Pr _\infty (T > \ell + m \mid T > \ell ) \geq 1-\alpha \quad \text{for all} ~ \ell \geq 1.
    \]

    Now,
    \begin{align*}
        \EV _\infty (T) &= \sum _{\ell = 0} ^{\infty } \Pr _\infty (T > \ell ) \\
            &= \sum _{i = 0} ^{m - 1} \sum _{k = 0} ^{\infty } \Pr _\infty (T > i + k \, m) \\
            &= \sum _{i = 0} ^{m - 1} \Pr _\infty (T > i) \sum _{k = 0} ^{\infty } \Pr _\infty (T > i + k \, m \mid T > i).
    \end{align*}
    It follows from \eqref{eq:lemma:conditional_sequencing} that for the inner summation
    \begin{align*}
        \sum _{k = 0} ^{\infty } \Pr _\infty (T > i + k \, m \mid T > i)
            &\geq \sum _{k = 0}^{\infty } (1 - \alpha )^k
            = 1 / \alpha ,
    \end{align*}
    and for the outer summation
    \begin{align*}
        \sum _{i = 0} ^{m - 1} \Pr _\infty (T > i)
            &\geq 1 + \sum _{i = 1} ^{m - 1} \Pr _\infty (T > m) \\
            &\geq 1 + \sum _{i = 1} ^{m - 1} (1 - \alpha) = \alpha + m (1 - \alpha ).
    \end{align*}
    Combining these two inequalities together, we get that
    \begin{align}\label{ARLvsLCPFA}
        \EV _\infty (T) \geq \gamma(m, \alpha) \quad \text{for} ~ \gamma(m,\alpha) = 1 + m (1 - \alpha) / \alpha,
    \end{align}
    which completes the first part of the proof.

    On the other hand, if $T\in \class_\gamma$, then in general there is no $\alpha(m, \gamma)$ such that $\LCPFA_m (T) \leq \alpha(m,\gamma)$.
    Indeed, suppose that $\class_\gamma$ can be covered by $\class(m, \alpha)$ for some $m \geq 1$ and $0 < \alpha < 1$.
    Consider $T_r = \ceiling{\gamma} + G_r$, where $G_r$ is geometrically distributed with parameter $r$, $\alpha < r < 1$. Clearly, $T_r \in C_\gamma$. However,
    \[
        \LCPFA_m (T_r) \geq \sup_{\ell \geq \gamma} \Pr_\infty (T_r \leq \ell + m \mid T > \ell)
            = 1 - (1 - r)^m \geq r > \alpha,
    \]
    which contradicts our assumption that $\class_\gamma$ can be covered by $\class(m, \alpha)$ and completes the proof.
\end{proof}

%%Proposition
\begin{proposition}
    $\class^\star(m, \alpha)$ is more stringent than $\class_\gamma$.
    In particular, if $T \in \class^\star(m, \alpha)$, then $T \in \class_\gamma$ for
    \begin{align} \label{eq:arl_bound:gamma_from_m_alpha_star}
        \gamma - 1 = \frac{m}{2} \left(\floor{\frac{1}{\alpha}} - 1\right).
    \end{align}
\end{proposition}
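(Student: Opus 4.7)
The plan is to mirror the argument of Proposition 2, with the unconditional constraint $\Pr_\infty(\ell < T \le \ell+m) \le \alpha$ replacing its conditional analogue $\Pr_\infty(T \le \ell+m \mid T > \ell) \le \alpha$. The crucial structural difference is that the conditional bound can be iterated geometrically (yielding the factor $1/\alpha$ in Proposition~2), whereas the unconditional bound only telescopes additively, and this is exactly what accounts for the weaker factor $\tfrac{1}{2}\floor{1/\alpha}$ appearing here.

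First, I would observe that summing the $\LUPFA$ constraint over the $j$ disjoint windows $(0,m], (m,2m], \ldots, ((j-1)m, jm]$ yields
\begin{equation*}
    \Pr_\infty(T \le jm) \le j\alpha, \quad \text{so} \quad \Pr_\infty(T > jm) \ge 1 - j\alpha,
\end{equation*}
which is informative precisely for $j \le K := \floor{1/\alpha}$. Writing $\EV_\infty(T) = \sum_{t\ge0} \Pr_\infty(T > t)$, grouping into length-$m$ blocks, and lower-bounding every term in block $j$ by its minimum $\Pr_\infty(T > jm)$ gives
\begin{equation*}
    \EV_\infty(T) \ge 1 + m\sum_{j=1}^{K}(1 - j\alpha) = 1 + mK\left(1 - \tfrac{(K+1)\alpha}{2}\right),
\end{equation*}
where the leading $1$ reflects the $t=0$ term together with the WLOG assumption $\Pr_\infty(T \ge 1)=1$ already used in Proposition~2. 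Since $K\alpha \le 1$, an elementary rearrangement yields $K\bigl(1 - (K+1)\alpha/2\bigr) \ge (K-1)/2$ (indeed the difference equals $(K+1)(1-K\alpha)/2 \ge 0$), producing the claimed bound $\gamma = 1 + \tfrac{m}{2}(\floor{1/\alpha}-1)$.

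For the reverse non-containment, establishing that the inclusion is strict, I would recycle the geometric construction from the proof of Proposition~2: the stopping time $T_r = \lceil \gamma \rceil + G_r$ with $G_r$ geometric of parameter $r$ lies in $\class_\gamma$ because $\EV_\infty(T_r) \ge \gamma$, yet
\begin{equation*}
    \LUPFA_{m'}(T_r) \ge \Pr_\infty(\lceil\gamma\rceil < T_r \le \lceil\gamma\rceil + m') = 1 - (1-r)^{m'},
\end{equation*}
which can be driven arbitrarily close to $1$ by choosing $r$ near $1$; hence $T_r \notin \class^\star(m',\alpha')$ for any $\alpha' < 1$.

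The main subtlety, as I see it, is not technical but combinatorial: selecting the coarse scale on which to sum the $\LUPFA$ constraint. Summing over too few blocks wastes information, while summing past $K$ forces one to drop strictly negative terms $(1-j\alpha)<0$. Taking $K = \floor{1/\alpha}$ is the sharp choice, and the bound is essentially tight, witnessed by any stopping time that places mass $\alpha$ at each of the points $m, 2m, \ldots, Km$, so no refinement of this block-summing approach will improve the $\tfrac{1}{2}$ factor.
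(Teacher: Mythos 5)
Your proof is correct, and for the key inequality it takes a genuinely different (and more complete) route than the paper. The paper's own proof of the containment $\class^\star(m,\alpha)\subseteq\class_\gamma$ consists of writing down a single extremal stopping time --- mass $\alpha$ at each of $1, 1+m,\dots,1+(K-1)m$ and the remainder at $1+Km$ with $K=\floor{1/\alpha}$ --- and computing its ARL to be $1+mK\bigl(1-(K+1)\alpha/2\bigr)\geq 1+\tfrac{m}{2}(K-1)$; as written, this verifies the bound only for that one example and leaves implicit the claim that it is the worst case. Your block-summing argument supplies exactly the missing universal statement: telescoping the $\LUPFA$ constraint over disjoint windows to get $\Pr_\infty(T>jm)\geq 1-j\alpha$, truncating at $j=K$ where the bound stays nonnegative, and lower-bounding $\EV_\infty(T)=\sum_{t\geq 0}\Pr_\infty(T>t)$ blockwise yields $\EV_\infty(T)\geq 1+mK\bigl(1-(K+1)\alpha/2\bigr)$ for \emph{every} $T$ in the class, and your algebraic identity $K\bigl(1-(K+1)\alpha/2\bigr)-(K-1)/2=(K+1)(1-K\alpha)/2\geq 0$ closes the gap to the stated $\gamma$. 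That your general lower bound coincides exactly with the ARL of the paper's example also substantiates your tightness remark. What the paper's approach buys is brevity and an explicit witness of (near-)sharpness; what yours buys is an actual proof of the containment. Your second half --- the non-containment via $T_r=\ceiling{\gamma}+G_r$ with $\LUPFA_{m'}(T_r)\geq 1-(1-r)^{m'}$ driven above any $\alpha'$ --- is the same construction the paper uses.
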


\begin{proof}
    To show that any $\class^\star (m, \alpha)$ can be covered by some $\class_\gamma$ fix $m \geq 1$ and $0 < \alpha < 1$ and let $T$ be such that
    \[
        \Pr_\infty(T = 1 + j m) = \begin{cases}
            \alpha \quad &\text{for $0 \leq j \leq K - 1$}, \\
            1 - K \alpha &\text{for $j = K$},
        \end{cases}
    \]
    where $K = \floor{1 / \alpha}$. Clearly, $T \in \class^\star (m, \alpha)$ and
    \begin{align*}
        \EV_\infty(T) = 1 + K m - K \, \frac{m \alpha}{2} - K^2 \, \frac{m \alpha}{2}
            \geq 1 + \frac{m}{2} \left(\floor{\frac{1}{\alpha}} - 1\right),
    \end{align*}
    which proves that $T \in \class^\star(m, \alpha)$ implies $T \in \class_\gamma$ with $\gamma = \gamma(m, \alpha)$ given by \eqref{eq:arl_bound:gamma_from_m_alpha_star}.

    Now, suppose that $\class_\gamma$ can be covered by $\class(m, \alpha)$ for some $m \geq 1$ and $0 < \alpha < 1$.
    As before, consider $T_r = \ceiling{\gamma} + G_r$, where $G_r$ is geometrically distributed with parameter $r$, $\alpha < r < 1$. Clearly, $T_r \in C_\gamma$. However,
    \[
        \LUPFA_m (T_r) = \Pr_\infty (T_r \leq \ceiling{\gamma} + m)
            = 1 - (1 - r)^m \geq r > \alpha,
    \]
    which contradicts the assumption and completes the proof.
\end{proof}

It is also worth pointing out that class $\class(m, \alpha)$ is the most natural choice for a practitioner.
Indeed, notice that a large value of the ARL to false alarm $\gamma$ does not necessarily guarantee small values of the maximal probabilities of false alarm, $\sup_\ell \Pr_\infty (\ell < T \leq \ell + m)$ and $\sup_\ell \Pr_\infty (T \leq \ell + m \mid T > \ell)$, as has been discussed by \cite{LaiIEEE98}, \cite{Tartakovsky-SQA08a}, and \cite{TNB_book2014} in detail.
Therefore, class $\class_\gamma$ may not be appropriate in a general case.
Now, assume that there is no change and consider a geometrically distributed stopping time\footnote{For example, the stopping times of the CUSUM and Shyriaev-Roberts (SR) detection procedures that start from the random initial conditions distributed according to quasi-stationary distributions are geometrically distributed.}, in which case there is a one-to-one correspondence between classes $\class_\gamma$ and $\class(m, \alpha)$.
Due to its memorylessness property, it is clear that as long as an alarm has not been raised, one can ignore the past observations and focus on detecting the change as if one just started the observation process.
This is in contrast to the unconditional $\class^\star(m, \alpha)$ that only takes into account the time when the procedure is initiated, disregarding the evolution of the observer.
Moreover, in this case, the unconditional probability $\Pr_\infty (\ell < T \leq \ell + m)$ achieves maximal value at $\ell=0$ and decays exponentially fast with $\ell$, while the conditional probability $\Pr_\infty (T \leq \ell + m \mid T > \ell)$ is constant.
So, it hardly makes sense to maximize unconditional probability which has its maximum at $\ell = 0$ and becomes almost $0$ after a handful of observations.
All aforementioned is true approximately in many cases where the properly normalized no-change distribution of stopping times is asymptotically exponential, which in the i.i.d.\ case is true for several detection procedures such as CUSUM and SR (see \cite{PollakTartakovskyTPA09}), SR mixtures when post-change parameters are unknown (see \cite{Yakir-AS95}), and for the generalized likelihood ratio CUSUM even in a substantially non-stationary case (see \cite{LiangTartakovskyVeerIEEEIT2022}).

The above argument allows us to conclude that class $\class(m, \alpha)$ is the most appropriate for most applications even if the ARL to a false alarm can be used as a measure of false alarms.

%%_________________________________ Sec
\subsection{The classical quickest change-point detection problem} \label{sec:clasQCPD}

As discussed in \cite{TNB_book2014}, there are several optimization criteria in the quickest change-point detection problems, which differ by available prior information and the definition of the false alarm rate. One popular minimax criterion was introduced by \cite{lorden-ams71} in his seminal paper:
\[
    \inf_{T \in \class_\gamma} \sup_{\nu \geq 0} \esssup \EV_{\nu}[T - \nu \mid T > \nu, Y_1, \dots, Y_\nu].
\]
It requires minimizing the conditional expected delay to detection $\EV_{\nu}[T - \nu \mid T > \nu, \cF_\nu]$ in the worst-case scenario with respect to both the change point $\nu$ and the trajectory $(Y_1, \dots, Y_\nu)$ of the observed process in the class of detection procedures $\class_\gamma$ with $\ARL(T) \geq \gamma$.
Hereafter $\esssup$ stands for essential supremum.
Lorden proved that Page's CUSUM detection procedure is asymptotically first-order minimax optimal as $\gamma \to \infty$.
Later on, \cite{MoustakidesAS86} in his ingenious paper established the exact optimality of CUSUM for any ARL to false alarm $\gamma \geq 1$.

Another popular, less pessimistic minimax criterion is due to \cite{PollakAS85}
\[
    \inf_{T \in \class_\gamma} \sup_{\nu \geq 0} \EV_{\nu}[T-\nu \mid T > \nu],
\]
which requires minimizing the conditional expected delay to detection $\EV_{\nu}[T-\nu \mid T > \nu]$ in the worst-case scenario with respect to the change point $\nu$ subject to a lower bound $\gamma$ on the ARL to false alarm.
\cite{PollakAS85} showed that the modified Shiryaev-Roberts (SR) detection procedure that starts from the quasi-stationary distribution of the SR statistic is third-order asymptotically optimal as $\gamma\to\infty$, i.e., the best one can attain up to an additive term $o(1)$, where $o(1)\to0$ as $\gamma\to\infty$.
Later \cite{tartakovskypolpolunch-tpa11} proved that this is also true for the SR-$r$ procedure that starts from the fixed but specially designed point $r$. See also \cite{PolunTartakovskyAS10} on exact optimality of the SR-$r$ procedure in a special case.

As mentioned in the introduction, the quickest change detection criteria may not be appropriate for the detection of transient changes of finite length $N$ in scenarios where detecting the change outside of the interval $[\nu + 1, \nu + N]$ leads to too large a loss associated with missed detection.
In the rest of the paper, we focus on reliable change detection.

%%Sec
\subsection{Reliable change detection optimization problem} \label{sec:formulations:optimization}

In what follows, we restrict ourselves to class $\class(m, \alpha)$ defined in \eqref{eq:class:lpfa_conditional}.
The task is to find a stopping time $T^\star \in \class (m, \alpha)$ that optimizes the chosen performance measure.
Just like there are different approaches for measuring the probability of false alarm, one can define the local probability of detection ($\LPD$) in several different ways.
Since we focus on \eqref{eq:class:lpfa_conditional}, we do not consider the unconditional local probability of detection considered, e.g., in \cite{Bakhache+Nikiforov:2000}, and only mention two alternatives.

The motivating performance measure introduced by~\cite{Tartakovsky_book2020} (Ch.\ 5) considers a random signal duration $N$ with known prior distribution $\pi=\{\pi_k\}_{k \geq 1}$, and was used, e.g., in \cite{Tartakovsky+et+al:2021}. Specifically, the problem is to find a stopping time $T$ that maximizes
\begin{align} \label{eq:lpd:lorden}
    \inf_{\nu \geq 0} \sum_{k = 1}^{\infty} \pi_k \essinf \Pr_{\nu} (\nu < T \leq \nu + k \mid \cF_\nu, N = k),
\end{align}
where $\essinf$ stands for essential infimum.
However, we adopt a different, less pessimistic measure
\begin{align} \label{eq:lpd:conditional}
    \LPD_\pi(T) &= \inf_{\nu \geq 0} \sum_{k = 1}^{\infty} \pi_k \Pr_{\nu} (T \leq \nu + k \mid T > \nu, N = k),
\end{align}
where $\pi$ is a valid pmf, corresponding either to the prior distribution of the signal duration if $N$ is random, or, in the case that $N$ is deterministic, to weights incorporating information about what $N$ could be, e.g., uniform weights would be appropriate if only the lower and upper bounds on the change duration are available.
Note that weighting detection probabilities associated with different change durations is essential even in the case that $N$ is deterministic: if we simply took infimum over all $\nu \geq 0$ and $N \geq 1$, the case of $N = 1$ would dominate the problem since $N = 1$ corresponds to the shortest (hence most difficult) change one could try to capture. More formally, this is a direct consequence of the fact that $\smallset{T \leq n} \in \cF_n$.

One's objective then is to solve the optimization problem
\begin{align} \label{eq:optimal_time}
    \sup _{T \in \class (m, \alpha )} \LPD_\pi(T).
\end{align}

One important practical problem is to maximize the probability of detection in some pre-specified window $M$ (possibly random) that does not exceed $N$, say detecting track initiation in a small window $M \ll N$.
In such scenarios, rather than setting $\pi$ to denote the pmf of the change duration in \eqref{eq:lpd:conditional}, one would set it to be the pmf of the detection window, since $\Pr_{\nu} (T \leq \nu + k \mid T > \nu, M = k \leq N) = \Pr_{\nu} (T \leq \nu + k \mid T > \nu, N = k)$.
Another possible motivation for the choice of $\pi$ in the deterministic $N$ case is mentioned in Section~\ref{sec:conclusions} (although we do not address it in this work).
Lastly, we would like to note that the case that $N$ is known has been considered by, e.g.,
\cite{Nikiforov+et+al:2012, Nikiforov+et+al:2017, Tartakovsky_book2020, Nikiforov+et+al:2023}.

In the remainder of the work, $\Durations$ denotes the essential support of $N$ if it is random, or $\Durations = \cset{k}{\pi_k > 0}$ otherwise.

%%Sec
\section{Maximal likelihood ratio-based rules} \label{sec:candidates}

In this section, we review several popular detection procedures that may appear to be nearly optimal when the probability of false alarm is small.
These procedures are designed, analyzed and compared in Sections~\ref{sec:procedure_design}~and~\ref{sec:simulations}.
However, all of them stem from the maximization of likelihood ratios paradigm in the context of intermittent change detection.
Specifically, we consider the following scenarios.
\begin{itemize}
    \item No information on the change duration is available (Section~\ref{subsec:uniform-ML}).
    \item An upper bound on the change duration is known (Section~\ref{subsec:window-limited-ML}).
    \item The change duration is known (Section~\ref{subsec:FMA}).
\end{itemize}
The three, respectively, give rise to the CUSUM rule; the so-called window-limited CUSUM (WL CUSUM) rule; and the Finite Moving Average (FMA) procedure.

%%Subsec
\subsection{No information on change duration} \label{subsec:uniform-ML}
Consider the pessimistic case where no information about the change duration is available.
A natural choice of stopping time is based on the maximal likelihood ratio process for the intermittent change, with maximization over both the starting point and the endpoint of the change, with no constraints on the signal duration (except that it hadn't ended before the observation started):
\begin{align} \label{eq:iml_infty:statistic}
    \wtV_n = \max_{-\infty \leq k \leq n} \, \max_{\max\smallset{1, k} \leq \ell \leq n} \Bigg[ \sum_{i = k}^{\ell} \lambda_i \Bigg],
\end{align}
where $\lambda_i=\log \Lambda_i$ denotes the instantaneous log-likelihood ratio for the $i$th observation (see \eqref{eq:instant_LR}).
Maximization region is illustrated on the diagram below, where the highlighted region corresponds to the additional maximization points added when transitioning from $\wtV_{n - 1}$ to $\wtV_n$ (and also coincides with the maximization region for the traditional ``permanent-change'' CUSUM at step $n$).

\begin{center}
    \begin{tikzpicture}
        \pgfmathsetmacro{\xscale}{0.5}
        \pgfmathsetmacro{\yscale}{0.5}
        \draw[->] (0, 0) -- (\xscale * 6.5, 0) node[right] {starts at $\ensuremath{k}$};
        \draw[->] (0, 0) -- (0, \yscale * 6.5) node[above] {stops at $\ensuremath{\ell }$};
        \draw[dotted] (0, \yscale*5) -- (\xscale * 6.5, \yscale*5) node[right] {now};
        \foreach \x in {1,...,4}
            \foreach \y in {\x,...,4}
                \filldraw (\xscale*\x, \yscale*\y) circle (2pt);
        \foreach \x in {1,...,5} \filldraw[\ThemeRed] (\xscale*\x, \yscale*5) circle (2pt);
        \draw (\xscale*1, 0.1) -- (\xscale*1, -0.1); \draw (\xscale*1, -0.35) node {$\ensuremath{1}$};
        \draw (\xscale*2, 0.1) -- (\xscale*2, -0.1); \draw (\xscale*2, -0.35) node {$\ensuremath{2}$};
        \draw (\xscale*3.6, -0.35) node {$\ensuremath{\cdots }$};
        \draw (\xscale*5, 0.1) -- (\xscale*5, -0.1); \draw (\xscale*5, -0.35) node {$\ensuremath{n}$};
        \draw (0.1, \yscale*1) -- (-0.1, \yscale*1); \draw (-0.1, \yscale*1) node[left] {$\ensuremath{1}$};
        \draw (-0.1, \yscale*2.6) node[left] {$\ensuremath{\vdots }$};
        \draw (0.1, \yscale*4) -- (-0.1, \yscale*4); \draw (-0.1, \yscale*4) node[left] {$\ensuremath{n - 1}$};
        \draw (0.1, \yscale*5) -- (-0.1, \yscale*5); \draw (-0.1, \yscale*5) node[left] {$\ensuremath{n}$};
    \end{tikzpicture}
\end{center}
The corresponding stopping time is
\begin{align*} \nolabel{eq:iml_infty:time}
    \wtT = \wtT (b) &= \inf \smallcset{n \geq 1}{\wtV_n \geq b},
\end{align*}
where $b = b(m, \alpha )$ is chosen so that $\LCPFA _m (\wtT) = \alpha$.

Note that due to the unknown signal duration, the recursive form of \eqref{eq:iml_infty:statistic} is not as simple as in the conventional (when the change is persistent) CUSUM. Introducing the conventional CUSUM statistic
\begin{align}\label{eq:Vcusum}
    V_n &= \max \set{0, V_{n - 1}} +\lambda _n \qquad \text{for $n \geq 1$, with $V_0 = 0$},
\end{align}
it is not difficult to show that
\begin{alignat}{2}
    \wtV_n = \max \set{\wtV_{n - 1}, V_n} \qquad \text{for $n \geq 1$, with $\wtV_0 = 0$}.
\end{alignat}
In other words, the process $\smallset{\wtV_n}$ is nothing but the running maximum value of the CUSUM process $\smallset{V_n}$.
An immediate consequence is that the stopping time $\wtT$ coincides with the classical Page's CUSUM (\cite{page-bka54}) stopping time
\begin{align} \label{eq:cusum:time}
      \wtT (b)=  T_\CUSUM(b) = \inf \bigcset{n \geq 1}{\max _{1 \leq k \leq n} \sum _{i = k} ^{n} \lambda _i \geq b}
            = \inf \cset{n \geq 1}{V_n \geq b}.
\end{align}

\begin{proof}[Proof of recursion]
    Consider $\wtV_n$ for $n \geq 2$:
    \begin{align*}
        \wtV_n = \max _{1 \leq k \leq n} \max _{k \leq \ell \leq n} \sum _{i = k} ^{\ell } \lambda _i
            = \max \Bigset{\wtV_{n - 1}, \max _{1 \leq k \leq n} \sum _{i = k} ^{n} \lambda _i}.
    \end{align*}
    The first term covers all changes that ended at or before time $(n - 1)$. The second term takes into account all changes that ended at time $n$ and is
    nothing but the ``persistent-change'' CUSUM with a well-known recursion given by \eqref{eq:Vcusum}.
\end{proof}

%%Subsec
\subsection{Upper bound on change duration} \label{subsec:window-limited-ML}
Suppose now that one knows that the change duration cannot exceed a given window size $M$, $M \geq 1$.
A reasonable course of actions in such circumstance is to maximize the likelihood ratio over changes that end after the observation starts with the signal duration constraint in mind:
\begin{align} \label{eq:iml_window:statistic}
    \wtV_{n : M} = \max_{-M + 2 \leq k \leq n} \, \max_{\max\smallset{1, k} \leq \ell \leq \min\smallset{n, k + M - 1}}
    \Bigg[ \sum_{i = k} ^{\ell } \lambda_i \Bigg].
\end{align}
Compared to \eqref{eq:iml_infty:statistic}, maximization happens over a strip-like region (rather than triangular), thus $\wtV_{n : M} \leq \wtV_n$ for all $n$.
\begin{center}
    \begin{tikzpicture}
        \pgfmathsetmacro{\xscale}{0.75}
        \pgfmathsetmacro{\yscale}{0.5}
        \pgfmathtruncatemacro{\breadth}{3}
        \pgfmathtruncatemacro{\special}{\breadth - 1}
        \pgfmathsetmacro{\markleft}{\xscale*(5 - \breadth + 0.5)}
        \pgfmathsetmacro{\markright}{\xscale*(5 + 0.5)}
        \draw[->] (-\xscale * 3.5, 0) -- (\xscale * 6.5, 0) node[right] {starts at $\ensuremath{k}$};
        \draw[->] (0, -0.3) -- (0, \yscale * 6.5) node[above] {stops at $\ensuremath{\ell }$};
        \draw[dotted] (0, \yscale*5) -- (\xscale * 6.5, \yscale*5) node[right] {now};
        \foreach \y in {1,...,\special} {
            \foreach \x in {1,...,\y}
                \filldraw (\xscale*\x, \yscale*\y) circle (2pt);
            \pgfmathtruncatemacro{\tempx}{-\breadth + \y + 1}
            \foreach \x in {\tempx,...,0}
                \draw[fill=white] (\xscale*\x, \yscale*\y) circle (2pt);
        }
        \foreach \y in {\breadth,...,4}
            \foreach \x in {1,...,\breadth}
                \filldraw ({\xscale*(\y - \breadth + \x)}, \yscale*\y) circle (2pt);
        \foreach \x in {1,...,\breadth} \filldraw[\ThemeRed] ({\xscale*(5 - \breadth + \x)}, \yscale*5) circle (2pt);
        % ~~ x-labels ~~
        \draw (-\xscale*1, 0.1) -- (-\xscale*1, -0.1); \draw (-\xscale*1 - 0.25, -0.35) node {$\ensuremath{-M + 2}$};
        \draw (\xscale*1, 0.1) -- (\xscale*1, -0.1); \draw (\xscale*1, -0.35) node {$\ensuremath{1}$};
        \draw (\xscale*2, 0.1) -- (\xscale*2, -0.1); \draw (\xscale*2, -0.35) node {$\ensuremath{2}$};
        \draw (\xscale*3.6, -0.35) node {$\ensuremath{\cdots }$};
        \draw (\xscale*5, 0.1) -- (\xscale*5, -0.1); \draw (\xscale*5, -0.35) node {$\ensuremath{n}$};
        % ~~ y-labels ~~
        %\draw (0.1, \yscale*1) -- (-0.1, \yscale*1); \draw (-0.1, \yscale*1) node[left] {$\ensuremath{1}$};
        %\draw (-0.1, \yscale*2.6) node[left] {$\ensuremath{\vdots }$};
        \draw (0.1, \yscale*4) -- (-0.1, \yscale*4); \draw (-0.1, \yscale*4) node[left] {$\ensuremath{n - 1}$};
        \draw (0.1, \yscale*5) -- (-0.1, \yscale*5); \draw (-0.1, \yscale*5) node[left] {$\ensuremath{n}$};
        % ~~ Window label ~~
        \draw[<->] (\markleft + 0.1, \yscale*5 + 0.5) -- (\markright - 0.1, \yscale*5 + 0.5);
        \draw (0.5*\markleft + 0.5*\markright, \yscale*5 + 0.5) node[above] {$\ensuremath{M}$};
        \draw[dashed] (\markleft, \yscale*5) -- (\markleft, \yscale*5 + 0.6);
        \draw[dashed] (\markright, \yscale*5) -- (\markright, \yscale*5 + 0.6);
    \end{tikzpicture}
\end{center}
Consequently, it leads to a different recursion:
\begin{align*}
    \wtV_{n : M} = \max\smallset{\wtV_{n - 1 : M}, V_{n : M}} \quad \text{for $n \geq 1$, with $\wtV_{0 : M} = 0$} ,
\end{align*}
where
\[
V_{n : M} = \max_{\max\smallset{1, n - M + 1} \leq k \leq n}\sum _{i = k} ^{n} \lambda _i.
\]

One can think of $\smallset{V_{n : M}}$ as a window-restricted version of the CUSUM statistic, $\smallset{V_n}$, although it no longer assumes a \emph{cumulative sum} representation. However, it may be implemented by re-running the CUSUM recursion on the latest $M$ observations with each new observation.
As with \eqref{eq:iml_infty:statistic}, the statistic \eqref{eq:iml_window:statistic} is non-decreasing.
The corresponding stopping time is
\begin{equation}\label{WLCUSUMst}
T_{\WLCUSUM : M} = \inf \smallcset{n \geq 1}{\wtV_{n : M} \geq b} = \inf \smallcset{n \geq 1}{V_{n : M} \geq b},
\end{equation}
where $b = b(m, \alpha )$ is chosen so that $\LCPFA _m (T_{\WLCUSUM : M}) = \alpha $.
It can be also rewritten as
\begin{align} \label{eq:iml_window:time}
    T_{\WLCUSUM : M} = \inf \Bigcset{n \geq 1}{\max _{\max\smallset{1, n - M + 1} \leq k \leq n} \sum _{i = k} ^{n} \lambda _i \geq b}.
\end{align}
A version of this rule, dubbed window-limited CUSUM (WL CUSUM), was proposed by \cite{willsky-ac76} and later on extensively studied in various settings including change-point detection, such as \cite{LaiIEEE98, Nikiforov+et+al:2012, Nikiforov+et+al:2017, Tartakovsky_book2020, Nikiforov+et+al:2023}.

%%Subsec
\subsection{Change duration is known} \label{subsec:FMA}
In the case that we assume the change duration is known and equal to $M$, the detection statistic is
\begin{align} \label{eq:fma:statistic}
    \max_{-M + 2 \leq k \leq n} \left[ \sum_{i = \max\smallset{1, k}}^{k + M - 1} \lambda_i \right].
\end{align}

Note that if the change starts early on we try to capture it with truncated log-likelihood ratio, $\sum_{i = 1}^{n} \lambda_i$, $1 \leq n \leq M - 1$.
We would like to emphasize that we do \emph{not} consider changes shorter than $M$, just that for the first $M - 1$ observations we only see part of the whole change.
This contrasts with the previous cases (CUSUM and window-limited CUSUM), where at each moment in time there is at least one change scenario that is \emph{fully} within the observation bounds.
For this reason, we recommend that the first $M - 1$ thresholds be adjusted to compensate for the unobserved portion of the change (see figure below).
\begin{center}
    \begin{tikzpicture}
        \pgfmathsetmacro{\xscale}{1.0}
        \pgfmathsetmacro{\yscale}{0.5}
        \pgfmathtruncatemacro{\breadth}{3}
        \pgfmathtruncatemacro{\breadthLessOne}{\breadth - 1}
        \pgfmathsetmacro{\mark}{\xscale*(5 - \breadth + 1)}
        \draw[->] (-\xscale * 1.0 - 1.5, 0) -- (\xscale * 4.5, 0) node[right] {starts at $\ensuremath{k}$};
        \draw[->] (0, -0.3) -- (0, \yscale * 6.5) node[above] {stops at $\ensuremath{\ell }$};
        \draw[dotted] (0, \yscale*5) -- (\xscale * 4.5, \yscale*5) node[right] {now};
        \draw[dashed] (\mark, \yscale*5) -- (\mark, \yscale*0 + 0.3);
        \foreach \y in {\breadth,...,4}
            \foreach \x in {1,...,1}
                \filldraw ({\xscale*(\y - \breadth + \x)}, \yscale*\y) circle (2pt);
        \foreach \x in {1,...,1} \filldraw[\ThemeRed] ({\xscale*(5 - \breadth + \x)}, \yscale*5) circle (2pt);
        %\foreach \y in {1,...,\breadthLessOne} \draw ({\xscale*(1)}, \yscale*\y) circle (2pt);
        \foreach \y in {1,...,\breadthLessOne} {
            \pgfmathtruncatemacro{\tempx}{-\breadth + \y + 1}
            \draw[fill=white] (\xscale*\tempx, \yscale*\y) circle (2pt);
        }
        % ~~ x-labels ~~
        \draw (-\xscale*1, 0.1) -- (-\xscale*1, -0.1); \draw (-\xscale*1 - 0.25, -0.35) node {$\ensuremath{-M + 2}$};
        \draw (\xscale*1, 0.1) -- (\xscale*1, -0.1); \draw (\xscale*1, -0.35) node {$\ensuremath{1}$};
        %\draw (\xscale*2, -0.35) node {$\ensuremath{\cdots }$};
        \draw (\xscale*3, 0.1) -- (\xscale*3, -0.1); \draw (\xscale*3, -0.35) node {$\ensuremath{n - M + 1}$};
        %\draw (\xscale*5, 0.1) -- (\xscale*5, -0.1); \draw (\xscale*5, -0.35) node {$\ensuremath{n}$};
        % ~~ y-labels ~~
        \draw (0.1, \yscale*1) -- (-0.1, \yscale*1); \draw (-0.1, \yscale*1) node[left] {$\ensuremath{1}$};
        %\draw (-0.1, \yscale*2.2) node[left] {$\ensuremath{\vdots}$};
        \draw (0.1, \yscale*\breadth) -- (-0.1, \yscale*\breadth); \draw (-0.1, \yscale*\breadth) node[left] {$\ensuremath{M}$};
        \draw (-0.1, \yscale*4.2) node[left] {$\ensuremath{\vdots}$};
        \draw (0.1, \yscale*5) -- (-0.1, \yscale*5); \draw (-0.1, \yscale*5) node[left] {$\ensuremath{n}$};
        % ~~ Window label ~~
        %\draw[<->] (\mark - \xscale*0.5, \yscale*5 + 0.5) -- (\mark + \xscale*\breadthLessOne + \xscale*0.5, \yscale*5 + 0.5);
        %\draw (\mark + 0.5*\xscale*\breadthLessOne, \yscale*5 + 0.5) node[above] {$\ensuremath{M}$};
        %\draw[dashed] (\xscale*0, \yscale*5) -- (\mark - 0.3, \yscale*5);
        %\draw[dashed] (\mark + 0.3, \yscale*5) -- (\xscale*5, \yscale*5) -- (\xscale*5, \yscale*0 + 0.3);
    \end{tikzpicture}
\end{center}
The resulting stopping time can be expressed by
\begin{align} \label{eq:fma:time}
    T_{\FMA : M} &= \inf \Bigcset{n \geq 1}{\hspace{-2.0em} \sum_{i = \max\smallset{1, n - M + 1}}^{n} \hspace{-2.0em} \lambda _i \geq b_n},
    %T_{\FMA : M} &= \min \left[ \inf \Bigcset{1 \leq n < M}{\sum_{i = 1}^{n} \lambda _i \geq b_n}, \inf \Bigcset{n \geq M}{\hspace{-1.0em} \sum_{i = n - M + 1}^{n} \!\! \lambda _i \geq b} \right].
\end{align}
where $b_n = b$ for all $n \geq M$, and for $1 \leq n \leq M - 1$ are chosen such that
\[
    \Pr_\infty \Big(\sum_{i = 1}^{n} \lambda _i \geq b_n\Big) = \Pr_\infty \Big(\sum_{i = 1}^{M} \lambda _i \geq b\Big),
        %\qquad \text{for $1 \leq n < M$},
\]
or, equivalently,
\begin{align} \label{eq:fma_thresholds}
    b_n = H_n^{-1}\big(H_M(b)\big) \qquad \text{for $1 \leq n < M$},
\end{align}
where $H_n$ is the cdf of $\sum_{i = 1}^{n} \lambda _i$ under $\Pr_\infty$, which in most cases is either known or amenable to numerical calculation.
This rule is a modification of a rule which is commonly known as Finite Moving Average (FMA for short).
A standard alternative approach, which we advocate against but still consider in our simulations (Section~\ref{sec:simulations}), would be to skip the first $M - 1$ observations altogether (corresponding to the changes that could not be fully observed) and only stop at times $\geq M$.

The later standard FMA rule has been studied extensively, e.g., \cite{lai-as74} examined it from the quality control perspective, while \cite{Egea-Roca+et+al:2018} considered it in the context of intermittent signal detection.
In the special case that $f$ and $g$ are Gaussian the truncated version of FMA is equivalent to $\inf \smallcset{n \geq M}{\sum _{i = n - M + 1} ^{n} Y_i \geq c}$, and as such this simplified version has gained some attention as a more tractable problem by \cite{Noonan+Zhigljavsky:2020,Noonan+Zhigljavsky:2021}.

%%%Sec
\section{Choosing thresholds} \label{sec:procedure_design}

One of the first steps in designing any stopping rule $T$ is to ensure that $T \in \class (m, \alpha)$, that is, its $\LCPFA(T) $ is upper-bounded by $\alpha$.

There are various approaches that one can take in order to choose the thresholds of these procedures to ensure that these rules are in class $\class(m, \alpha)$.
One such way takes advantage of the asymptotic distribution of the stopping times, which under the no-change hypothesis is often exponential (see, e.g., \cite{PollakTartakovskyTPA09}).
Suppose that under $\Pr _\infty $ a stopping time $T$ is geometrically distributed with parameter $\varrho$, at least approximately. That is,
\[
    \Pr _\infty (T = k) = \varrho \, (1 - \varrho )^{k - 1} \qquad \text{for $k = 1, 2, \cdots $.}
\]
It is not hard to see that in this case the expression $\Pr _\infty (T \leq \ell + m \mid T > \ell )$ does not depend on $\ell $, $\ell \geq 0$, and
\[
    \LCPFA _m (T) = 1 - (1 - \varrho )^m.
\]
Clearly, in this case, there is one-to-one correspondence between $\LCPFA _m (T)$ and $\ARL(T)$, and in order to ensure that $T \in \class(m, \alpha)$ it suffices to set
\begin{equation} \label{PFAvsARL}
    \ARL(T) = \frac{1}{1 - (1 - \alpha)^{1/m}}.
\end{equation}

Unfortunately, there is only a handful of cases when the no-change distribution of the stopping time is exactly geometric. Examples include the Shewhart procedure \cite[Sec 5.2]{Tartakovsky_book2020}, the randomized at $0$ CUSUM procedure when the CUSUM statistic $V_n$ starts not from $V_0=0$ but from the random value $V_0$ with the quasi-stationary distribution of the CUSUM statistic $\Pr(V_0 \leq v) = \lim_{n\to\infty} \Pr_\infty(V_n \leq v \mid T_{\rm CS}>n)$, and the Shiryaev-Roberts-Pollak procedure that starts from the quasi-stationary distribution of the Shiryaev-Roberts statistic.

Typically, $\Pr_\infty$-distributions of the properly normalized stopping times are asymptotically exponential.
For example, it follows from \cite{PollakTartakovskyTPA09} that $T_{\rm CS}/\ARL(T_{\rm CS})$ is asymptotically exponential as $\ARL(T_{\rm CS})$ is large.
Moreover, asymptotics kick in for moderate values of $\ARL$ which are reasonable for practical applications. Then if the estimate of $\ARL$ is available, the approximation of the $\LCPFA(T)$ can be easily found, e.g., using \eqref{PFAvsARL}.

However, such a general approach does not take advantage of the structure of the stopping time.
To this end, we review the three rules presented in the previous section and exploit their properties to propose more reliable and accurate design methods.

%%%SubSec
\subsection{CUSUM design} \label{subsec:design:CUSUM}

We begin with the numerical approach for designing the CUSUM detection procedure given by the stopping time \eqref{eq:cusum:time}, $T_\CUSUM = \inf \cset{n \geq 1}{V_n \geq b}$, where the CUSUM statistic satisfies the recursion
\begin{align*}
    V_n = \max\smallset{0, V_{n - 1}} + \lambda_{n}, \quad n \geq 1, \quad V_0 = 0.
\end{align*}
Note that $V_n$ is a homogeneous Markov process.
For such stopping times one does not have to rely on Monte-Carlo methods to choose design parameters.
Instead, one can adopt a numerical framework based on integral equations developed by \cite{MoustPolTarSS09}, which provides an accurate deterministic method of evaluating various operating characteristics with any desired precision.
We adapt it for CUSUM \eqref{eq:cusum:time} in order to get a handle on both $\LCPFA_m(T_\CUSUM)$ and $\LPD_\pi(T_\CUSUM)$.

To this end, consider a version of \eqref{eq:cusum:time} with its recursive statistic initialized at an arbitrary point $r < b$. Specifically,
\begin{align*}
    T_\CUSUM^\star(r) = \inf \cset{n \geq 1}{V_n^\star(r) \geq b},
\end{align*}
where
\[
    V_n^\star(r) = \max\smallset{0, V_{n - 1}^\star(r)} + \lambda_{n}, \quad V_0^\star(r) = r.
\]
Let $F_j$ denote the cdf of $\Lambda_1 = e^{\lambda_1}$ under $\Pr_j$ for $j = 0, \infty$, and introduce
\begin{alignat*}{2}
    \rho_{\ell, \infty}(r) &= \Pr_\infty(T_\CUSUM^\star(r) > \ell), \qquad & \rho_{0, \infty}(r) &\equiv 1, \\
    \rho_{\ell, \nu : n}(r) &= \Pr_{\nu} \big(T_\CUSUM^\star(r) > \ell \mid N = n\big), \qquad & \rho_{0, \nu : n}(r) &\equiv 1.
\end{alignat*}
Then, for $\ell \geq 1$, denoting $B = e^b$, one has
\begin{align} \label{eq:integral:tail:infty}
    \rho_{\ell, \infty}(r) &= \int_{0}^{B} \rho_{\ell - 1, \infty}(x) \, {\rm d} F_\infty \left(\frac{x}{\max\smallset{1, r}}\right),
\end{align}
and
\begin{align} \label{eq:integral:tail:change}
    \rho_{\ell, \nu : n}(r) &= \int_{0}^{B} \rho_{\ell - 1, \nu : n}(x) \, {\rm d} F_{J(\ell, \nu : n)} \left(\frac{x}{\max\smallset{1, r}}\right),
\end{align}
where $J(\ell, \nu : n) = 0$ if $\nu + 1 \leq \ell \leq \nu + n$, and $J(\ell, \nu : n) = \infty$ otherwise.
In order to find
\begin{align}
    \LCPFA_m(T_\CUSUM) &= 1 - \inf_{\ell \geq 0} \frac{\rho_{\ell + m, \infty}(1)}{\rho_{\ell, \infty}(1)} \label{eq:integral:lpfa}, \\
    \LPD_\pi (T_\CUSUM) &= 1 - \sup_{\ell \geq 0} \sum_{k \in \Durations} \pi_k \, \frac{\rho_{\ell + k, \ell : k}(1)}{\rho_{\ell, \ell : k}(1)} \label{eq:integral:lpd},
\end{align}
it suffices to solve the integral equations \eqref{eq:integral:tail:infty} and \eqref{eq:integral:tail:change}.
This can be achieved numerically by linearizing the system (cf.\ \cite{TartakovskyPolunchenko-FUSION08}).
Specifically, we partition the interval $[0, B]$ into $N$ sub-intervals with endpoints $0 = x_{0} < x_{1} < \cdots < x_{N} < x_{N + 1} = B$.
Denote the midpoints of these intervals $r_i = (x_{i} + x_{i + 1})/2$, and introduce $\hat{\rho}_{\ell, \infty}$ as piecewise-constant approximation to $\rho_{\ell, \infty}$ on $[0, B]$ (similarly for $\rho_{\ell, \nu : n}$):
\[
    \hat{\rho}_{\ell, \infty}(x) = \sum_{i = 0}^{N} \rho_{\ell, \infty}(r_i) \, \One \{x_i < x < x_{i + 1}\}.
\]
Substituting $\hat{\rho}$'s for $\rho$'s in \eqref{eq:integral:tail:infty} and \eqref{eq:integral:tail:change}, one obtains a system of linear equations for the values of $\rho$ at midpoints $r_i$, $1 \leq i \leq N$, that can be written compactly using matrix notation:
\begin{alignat}{2}
    \boldsymbol{\hat{\rho}}_{\ell, \infty} &= \boldsymbol{K}_\infty \cdot \boldsymbol{\hat{\rho}}_{\ell - 1, \infty}, \qquad & \boldsymbol{\hat{\rho}}_{0, \infty} &= \boldsymbol{1}, \label{eq:linear:tail:infty} \\
    \boldsymbol{\hat{\rho}}_{\ell, \nu : n} &= \boldsymbol{K}_{\! J(\ell, \nu : n)} \cdot \boldsymbol{\hat{\rho}}_{\ell - 1, \nu : n}, \qquad & \boldsymbol{\hat{\rho}}_{0, \nu : n} &= \boldsymbol{1}. \label{eq:linear:tail:change}
\end{alignat}
Here $\boldsymbol{K}_\infty$ and $\boldsymbol{K}_0$ are $N$-by-$N$ matrices whose $(i, j)$-th elements are
\[
    K_{i,j} = \Pr \left[\Lambda_1 < \frac{x_{j + 1}}{\max\smallset{1, r_i}}\right] - \Pr \left[\Lambda_1 < \frac{x_{j}}{\max\smallset{1, r_i}}\right],
    \qquad
\]
for $\Pr = \Pr_\infty$ and $\Pr = \Pr_0$, respectively; $\boldsymbol{1} = \left[1, 1, \dots, 1 \right]^{\top}$; and
\begin{align*}
    \boldsymbol{\hat{\rho}}_{\ell, \infty} &= \left[ \rho_{\ell, \infty}(r_1), \dots, \rho_{\ell, \infty}(r_N) \right]^{\top}, \\
    \boldsymbol{\hat{\rho}}_{\ell, \nu : n} &= \left[ \rho_{\ell, \nu : n}(r_1), \dots, \rho_{\ell, \nu : n}(r_N) \right]^{\top}.
\end{align*}
Thus, solving \eqref{eq:linear:tail:infty}--\eqref{eq:linear:tail:change} yields an approximate solution to \eqref{eq:integral:tail:infty}--\eqref{eq:integral:tail:change}.
Computational complexity of solving the system is minimal even for large values of $N$ and $\ell$ since it can be performed iteratively and only requires matrix-vector multiplication.

In order to get $\LCPFA_m(T_\CUSUM)$ and $\LPD_\pi (T_\CUSUM)$, we have to examine the functional dependency of ratio of $\rho$'s on $\ell$. This can be done numerically by capping $\ell$ at a high enough level (as further described in Section~\ref{sec:simulations}).
Numerical evidence suggests that infimum in \eqref{eq:integral:lpfa} is attained as $\ell \to \infty$ when CUSUM is in the quasi-stationary regime, while supremum in \eqref{eq:integral:lpd} is reached at $\ell = 0$.
It is worth noting that the quasi-stationary mode is attained relatively quickly, typically for $\ell$ on the order of dozens.
Consequently, for a given threshold and parameters $m$, $\pi_{k}$, we get an approximation for $\LCPFA_m(T_\CUSUM)$ and $\LPD_\pi (T_\CUSUM)$ through \eqref{eq:integral:lpfa} and \eqref{eq:integral:lpd}.
Since the process is not computationally expensive, finding the threshold for which the desired level of false alarms is attained can be achieved with relative ease.

In addition, the CUSUM procedure allows for an efficient asymptotic analysis, which can be used to obtain simple approximations for a relatively low false alarm rate.
Indeed, as established by \cite{PollakTartakovskyTPA09} $T_{\rm CS}(b)/\ARL(T_{\rm CS}(b))$ is asymptotically exponential as $b \to \infty$. Standard renewal-theoretic methods [cf. \cite{siegmund-book85,woodroofe-book82,Tartakovsky_book2020}] readily apply to obtain that
\[
    \ARL(T_{\rm CS}(b)) = C^{-1} e^b(1+o(1)) \quad \text{as}~~ b \to \infty,
\]
where the constant $C\in(0,1)$ depends on the model and can be computed explicitly by renewal-theoretic argument [cf. \cite{TartakovskyIEEECDC05,TNB_book2014}].
In particular, for the Gaussian model when the pre-change density $g$ is $(0, \sigma^2)$-normal and the under-change density $f$ is $(\mu, \sigma^2)$-normal this constant can be easily computed numerically from the formula
\begin{align}\label{Cnormal}
    C = \frac{2}{q} \exp\set{-2 \sum_{t=1}^\infty \frac{1}{t} \Phi\left( -\frac{1}{2} \sqrt{q t}\right)},
\end{align}
where $q=\mu^2/\sigma^2$ is the ``signal-to-noise ratio'' and $\Phi(x) = (2\pi)^{-1/2} \int_{-\infty}^x \exp\set{-t/2} \rm{d} t$ is the standard normal cdf.
Also, in this case, simple Siegmund's corrected Brownian motion approximation [\cite{siegmund-book85}] yields
\[
    C \approx \exp\set{-\rho \sqrt{q}}, \quad \rho = 0.582597,
\]
which is sufficiently accurate as long as $q$ is not large (typically for $q \leq 2$). See, e.g., Table 3.1, Sec 3.1.5 in \cite{TNB_book2014}.
Therefore, for sufficiently large (typically for moderate) threshold values we have an approximation
\begin{align}\label{LPFAapprox}
    \LPFA_m(T_{\rm CS}(b)) \approx 1- \exp\set{-m C/e^b}.
\end{align}

Next, by nonlinear renewal theory, the limiting $\Pr_0$-distribution of the stopping time $\tau_b=(T_{\rm CS}(b)-b/\mu)\sqrt{b \sigma^2/\mu^3}$, where $\mu = \EV_0[Y_1]$, $\sigma^2 = \Var_0(Y_1)$, is normal:
\[
    \Pr_0\set{\tau_b \leq x} = \Phi(x) \quad \text{as}~~ b \to \infty ~~\text{for all}~ x\in(-\infty, \infty)
\]
(see, e.g., Theorem 2.6.2 in \cite{TNB_book2014}).
Hence, the probability of detection when $\nu=0$ can be approximated as follows:
\begin{align}\label{PD0approx}
    \Pr_0(T_{\rm CS}(b) \leq k \mid N = k) \approx \Phi\Bigg(\frac{k}{\sqrt{b \sigma^2/\mu^3}} + \frac{b}{\mu}\Bigg).
\end{align}
We conjecture that for a variety of i.i.d.\ models
\[
    \inf_{\nu \geq 0} \Pr_\nu(T_{\rm CS}(b) \leq \nu + k \mid T_{\rm CS}(b) > \nu, N = k) = \Pr_0(T_{\rm CS}(b) \leq k \mid N=k),
\]
which is confirmed by numerical study for the Gaussian example in Section~\ref{sec:simulations}. Thus, using \eqref{PD0approx}, we obtain the
following approximation for the minimal probability of detection
\begin{align}\label{PDapprox}
    \LPD_\pi(T_{\rm CS}(b)) \approx \sum_{k\in \Durations} \pi_k \Phi\Bigg(\frac{k}{\sqrt{b \sigma^2/\mu^3}} + \frac{b}{\mu}\Bigg).
\end{align}

%%%SubSec
\subsection{Window-limited CUSUM design} \label{subsec:design:WL-CUSUM}

Unfortunately, the integral equations method described above only works when the detection statistic has Markovian nature.
Window-limited CUSUM, given in \eqref{eq:iml_window:time},
\begin{align*}
    T_{\WLCUSUM : M} = \inf \Bigcset{n \geq 1}{\max _{\max\smallset{1, n - M + 1} \leq k \leq n} \sum _{i = k} ^{n} \lambda _i \geq b},
\end{align*}
does not have that property, and one has to adopt different approaches.
In particular, we are going to obtain a set of bounds on error probabilities that would allow one to: (i) control probability of false alarm; and (ii) establish a lower bound on probability of detection.

A lower bound on $\LPD$ and upper bounds on the unconditional probability of false alarm, $\LUPFA$, for a version of WL CUSUM \eqref{eq:iml_window:time} has been established in \cite{Nikiforov+et+al:2017, Nikiforov+et+al:2023} in the case of known signal duration.
These results can be extended and modified to fit our setting.
While the idea behind the proof of the following results is not new, there are several key differences.
First, the stopping times considered by \cite{Nikiforov+et+al:2017, Nikiforov+et+al:2023} all had the delay of $M$ observations. Specifically, their versions of WL CUSUM and FMA could not stop during the ``warm-up'' period before the first $M$ observations have been collected.
Consequently, the optimization criterion used in that work maximized the local probability of detection over changes that start at least at time $M$, in our notation that would correspond to $\nu \geq M$. Our results hold for all three procedures (WL CUSUM, FMA, and modified FMA) with no restriction on when the change starts. That distinction is especially important for FMA when its window size is chosen to be greater than the smallest change duration, $\inf \Durations$.
Second, we dropped the need to check whether the partial sums of likelihood ratios are associated.
\begin{lemma} \label{lemma_3}
    For the WL CUSUM rule given by \eqref{eq:iml_window:time} the following bounds hold:
    \begin{align}
        \LCPFA_m(T_\WLCUSUM) &\leq 1 - \Big[\prod_{k = 1}^{M} \Pr_\infty \Big(\sum_{i = 1}^{k}\lambda_i < b\Big) \Big]^m, \label{eq:iml_window:lpfa_bound} \\
        \LPD_\pi (T_\WLCUSUM) &\geq \sum_{k \in \Durations} \pi_k \Pr_{0} \Big(\sum_{i = 1}^{k \wedge M}\lambda_i \geq b \mid N = k\Big), \label{eq:iml_window:lpd_bound}
    \end{align}
    where $k \wedge M = \min\smallset{k, M}$.
\end{lemma}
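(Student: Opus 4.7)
The plan is to prove the two bounds by rather different techniques. For the $\LCPFA_m$ upper bound I would exploit the fact that the ``no-stop'' events of WL CUSUM are intersections of events of the form $\smallset{\sum \lambda_i < b}$---hence decreasing functions of the independent log-likelihood ratios under $\Pr_\infty$---and invoke the FKG/Harris inequality twice. For the $\LPD_\pi$ lower bound I would identify a single, tractable sufficient event for detection and exploit the independence of the post-$\nu$ observations from the past.

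For the false alarm bound I would write $\smallset{T_\WLCUSUM > \ell + m} = \smallset{T_\WLCUSUM > \ell} \cap NS$, where $NS$ denotes ``no stop in $[\ell + 1, \ell + m]$''. Both factors are intersections of events $\smallset{\sum \lambda_i < b}$ and therefore decreasing functions of the independent variables $(\lambda_1, \dots, \lambda_{\ell + m})$. A first application of Harris's inequality then yields $\Pr_\infty(T_\WLCUSUM > \ell + m) \geq \Pr_\infty(T_\WLCUSUM > \ell) \cdot \Pr_\infty(NS)$, so that $\Pr_\infty(T_\WLCUSUM > \ell + m \mid T_\WLCUSUM > \ell) \geq \Pr_\infty(NS)$. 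A second application to $NS = \bigcap_{j = 1}^{m} \bigcap_{k = 1}^{M \wedge (\ell + j)} \smallset{\sum_{i = \ell + j - k + 1}^{\ell + j} \lambda_i < b}$ gives $\Pr_\infty(NS) \geq \prod_{j = 1}^{m} \prod_{k = 1}^{M \wedge (\ell + j)} q_k$, with $q_k := \Pr_\infty(\sum_{i = 1}^{k} \lambda_i < b)$. Since each $q_k \leq 1$, replacing $M \wedge (\ell + j)$ by $M$ only shrinks the product, yielding the $\ell$-independent estimate $[\prod_{k = 1}^M q_k]^m$; taking $\sup_\ell$ delivers \eqref{eq:iml_window:lpfa_bound}.

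For the detection bound I would fix $\nu \geq 0$ and $k \in \Durations$ and work with the single sufficient event $\smallset{\sum_{i = \nu + 1}^{\nu + k \wedge M} \lambda_i \geq b}$. The crucial observation is that at time $n_0 = \nu + (k \wedge M)$ the starting index $\nu + 1$ lies in the admissible window $[\max\smallset{1, n_0 - M + 1}, n_0]$, so this sum is one of the candidates contributing to $V_{n_0 : M}$; if it exceeds $b$, then $T_\WLCUSUM \leq n_0 \leq \nu + k$. Under $\Pr_\nu(\cdot \mid N = k)$ the summands $\lambda_{\nu + 1}, \dots, \lambda_{\nu + k \wedge M}$ are i.i.d.\ with the under-change density and independent of $\cF_\nu$, while $\smallset{T_\WLCUSUM > \nu}$ is $\cF_\nu$-measurable, so conditioning on $\smallset{T_\WLCUSUM > \nu}$ has no effect on the distribution of this sum. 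The resulting lower bound $\Pr_0(\sum_{i = 1}^{k \wedge M} \lambda_i \geq b \mid N = k)$ is uniform in $\nu$; weighting by $\pi_k$ and noting the infimum over $\nu$ is vacuous then yields \eqref{eq:iml_window:lpd_bound}.

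The main obstacle is the first FKG step of the LCPFA argument: a naive lower bound on $\Pr_\infty(T_\WLCUSUM > \ell + m)$ alone would decay like $[\prod_k q_k]^{\ell + m}$ and be useless, so it is essential to extract the factor $\Pr_\infty(T_\WLCUSUM > \ell)$ via association before one factors $NS$. What makes everything go through cleanly is that all relevant events are monotone functions of independent $\lambda_i$'s, whence Harris's inequality applies without any case-by-case verification of association of partial sums---this appears to be exactly the simplification over \cite{Nikiforov+et+al:2017, Nikiforov+et+al:2023} highlighted in the paragraph preceding the lemma.
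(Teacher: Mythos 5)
Your proof is correct and follows essentially the same route as the paper's: the false-alarm bound uses exactly the paper's two applications of positive association (Harris/FKG for decreasing events of the independent $\lambda_i$'s, giving first $\Pr_\infty(T>\ell+m)\geq\Pr_\infty(T>\ell)\Pr_\infty(NS)$ and then the full factorization of $NS$), and the detection bound singles out one window fully contained in the change interval and uses independence of the post-$\nu$ observations from $\cF_\nu$. The only cosmetic difference is that you take the first $k\wedge M$ under-change observations and argue directly on the detection event, whereas the paper takes the last $k\wedge M$ and argues on the complement via a set inclusion; both yield the identical bound.
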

\begin{proof}
    Let $S_a^b = \sum_{i = a}^{b} \lambda_i$ for $1 \leq a \leq b$.
    Note that for the i.i.d.\ model \eqref{eq:model:iid} the partial sums $S_a^b$ are associated \cite[Thm.~5.1]{Robbins1954, esary} under $\Pr_\infty$.
    We first prove \eqref{eq:iml_window:lpfa_bound}.
    Consider $\LCPFA_m$:
    \begin{align*}
        \LCPFA_m(T_\WLCUSUM) &= 1 - \inf_{\ell \geq 0}\frac{\Pr_\infty(T_\WLCUSUM > \ell + m)}{\Pr_\infty(T_\WLCUSUM > \ell)} \\
            &\leq 1 - \inf_{\ell \geq 0} \Pr_\infty \Big( \bigcap_{n = \ell + 1}^{\ell + m} \Bigset{\max_{1 \leq j \leq n \wedge M} S_{n - j + 1}^{n} < b} \Big) \\
            &\leq 1 - \min_{0 \leq \ell \leq M - 1} \prod_{n = \ell + 1}^{\ell + m} \prod_{j = 1}^{n \wedge M} \Pr_\infty \Big(S_{n - j + 1}^{n} < b\Big) \\
            &= 1 - \Big[\prod_{j = 1}^{M} \Pr_\infty \Big(S_{M - j + 1}^{M} < b\Big) \Big]^m,
    \end{align*}
    where both inequalities are due to association between partial sums. The statement \eqref{eq:iml_window:lpfa_bound} follows due to independence of $\lambda$'s.
    Second, we prove \eqref{eq:iml_window:lpd_bound}.
    Clearly,
    \[
        \Pr_\nu(T_\WLCUSUM > \nu + k \mid T > \nu, N = k) = \frac{\Pr_\nu \Big(\bigcap_{n = 1}^{\nu + k} \Bigset{\displaystyle\max_{1 \leq j \leq n \wedge M} S_{n - j + 1}^{n} < b} \mid N = k \Big)}{\Pr_\infty \Big(\bigcap_{n = 1}^{\nu} \Bigset{\displaystyle\max_{1 \leq j \leq n \wedge M} S_{n - j + 1}^{n} < b} \Big)}.
    \]
    Let $(k - M)^+$ denote $\max\smallset{0, k - M}$. It is not hard to see that
    \[
        \bigcap_{n = 1}^{\nu + k} \Bigset{\max_{1 \leq j \leq n \wedge M} S_{n - j + 1}^{n} < b}
            \subseteq \bigcap_{n = 1}^{\nu} \Bigset{\max_{1 \leq j \leq n \wedge M} S_{n - j + 1}^{n} < b}
            \bigcap \Bigset{S_{\nu + 1 + (k - M)^+}^{\nu + k} < b},
    \]
    if all changes ending after $\nu$ are omitted except for the longest one that starts after $\nu$ and ends at exactly $\nu + k$. Due to independence of $\lambda$'s
    \begin{align*}
        &\Pr_\nu \Big(\bigcap_{n = 1}^{\nu + k}\Bigset{\max_{1 \leq j \leq n \wedge M} S_{n - j + 1}^{n} < b} \mid N = k \Big) \\
            &\qquad\leq
            \Pr_\infty \Big(\bigcap_{n = 1}^{\nu} \Bigset{\max_{1 \leq j \leq n \wedge M} S_{n - j + 1}^{n} < b}\Big)
            \Pr_\nu \Big(S_{\nu + 1 + (k - M)^+}^{\nu + k} < b \mid N = k\Big).
    \end{align*}
    Consequently,
    \[
        \Pr_\nu(T_\WLCUSUM > \nu + k \mid T > \nu, N = k) \leq \Pr_0 \Big(S_{1 + (k - M)^+}^{k} < b \mid N = k\Big),
    \]
    and \eqref{eq:iml_window:lpd_bound} follows.
\end{proof}

It is worth noting that for practical purposes the upper bound on the probability of false alarms \eqref{eq:iml_window:lpfa_bound} is most important, especially in security-critical applications where one needs to guarantee that the error probability does not exceed a prescribed level.
These bounds are amenable to numerical calculation and can be evaluated with arbitrary precision without relying on randomized methods.
We further explore the sharpness of both bounds in Section~\ref{sec:simulations}, although \eqref{eq:iml_window:lpd_bound} should be a lot less accurate since it relies on omitting a noticeable chunk of observation.

%%%SubSec
\subsection{FMA design} \label{subsec:design:FMA}

Recall the generalized FMA rule given in \eqref{eq:fma:time}:
\begin{align*}
    T_{\FMA : M} &= \inf \Bigcset{n \geq 1}{\hspace{-2.0em} \sum_{i = \max\smallset{1, n - M + 1}}^{n} \hspace{-2.0em} \lambda _i \geq b_n},
\end{align*}
where the first $(M - 1)$ thresholds are chosen according to \eqref{eq:fma_thresholds}.
This rule is similar to window-limited CUSUM studied in the previous section in that it has a sliding window structure to its detection statistic.
For this reason, bounds similar to \eqref{eq:iml_window:lpfa_bound}--\eqref{eq:iml_window:lpd_bound} can be established for FMA \eqref{eq:fma:time}, although the set of permissible change duration $\Durations$ directly affects the choice of $M$.

%More specifically, if there is a $k \in \Durations$ such that $k < M$, then \eqref{eq:fma:time}  miss the change completely if it starts early, yielding $\LPD_\pi (T_\FMA) = 0$.
\begin{lemma} \label{lemma_4}
    For the FMA rule given by \eqref{eq:fma:time}--\eqref{eq:fma_thresholds} the following upper bound holds:
    \begin{align}
        \LCPFA_m(T_\FMA) &\leq 1 - \Big[\Pr_\infty \Big(\sum_{i = 1}^{M}\lambda_i < b\Big) \Big]^m.
        \label{eq:fma:lpfa_bound}
    \end{align}
    Furthermore, if $M \leq \inf \Durations$, then
    \begin{align}
        \LPD_\pi (T_\FMA) &\geq \Pr_{0} \Big(\sum_{i = 1}^{M} \lambda_i \geq b \mid N = M\Big). \label{eq:fma:lpd_bound}
    \end{align}
\end{lemma}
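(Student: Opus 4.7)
The plan is to mirror the proof of Lemma~\ref{lemma_3} while exploiting two simplifications special to the FMA rule: at each time $n$ the detection statistic is a single partial sum over the window $[\max\{1,n-M+1\}, n]$, rather than a maximum over nested windows, and the boundary thresholds $b_n$ in \eqref{eq:fma_thresholds} have been calibrated precisely so that the pre-change probability of not triggering at step $n$ is the same for every $n$. Introduce $A_n = \{\sum_{i=\max\{1,n-M+1\}}^n \lambda_i < b_n\}$ so that $\{T_\FMA > k\} = \bigcap_{n=1}^k A_n$.

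For \eqref{eq:fma:lpfa_bound} it suffices to show $\Pr_\infty(T_\FMA > \ell+m)/\Pr_\infty(T_\FMA > \ell) \geq [\Pr_\infty(\sum_{i=1}^M \lambda_i < b)]^m$ uniformly in $\ell \geq 0$; taking the complement and supremum over $\ell$ then gives the claim. Each indicator $\mathbf{1}_{A_n}$ is a non-decreasing function of the independent family $(-\lambda_i)_{i \geq 1}$, so the events $\{A_n\}_{n \geq 1}$ are associated under $\Pr_\infty$, exactly as in Lemma~\ref{lemma_3}. Two applications of the association (FKG) inequality yield $\Pr_\infty(\bigcap_{n=1}^{\ell+m} A_n) \geq \Pr_\infty(\bigcap_{n=1}^\ell A_n) \prod_{n=\ell+1}^{\ell+m} \Pr_\infty(A_n)$. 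The calibration \eqref{eq:fma_thresholds} now does the rest: $\Pr_\infty(A_n) = \Pr_\infty(\sum_{i=1}^M \lambda_i < b)$ for every $n \geq 1$ — directly for $n \geq M$ because the $\lambda_i$ are i.i.d.\ under $\Pr_\infty$, and by the very definition of $b_n$ for $1 \leq n \leq M-1$ — producing a clean $m$-fold product.

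For \eqref{eq:fma:lpd_bound}, the assumption $M \leq \inf \Durations$ ensures that for every $k \in \Durations$ and every $\nu \geq 0$ the window $\{\nu+1, \ldots, \nu+M\}$ sits inside the under-change block $\{\nu+1, \ldots, \nu+k\}$, and that the relevant threshold is $b_{\nu+M} = b$. Hence $\{T_\FMA > \nu+k\} \subseteq \{T_\FMA > \nu\} \cap \{\sum_{i=\nu+1}^{\nu+M} \lambda_i < b\}$. Under $\Pr_\nu(\cdot \mid N=k)$ these two events depend on disjoint blocks of observations — the first on $Y_1, \ldots, Y_\nu$ (i.i.d.\ $g$), the second on $Y_{\nu+1}, \ldots, Y_{\nu+M}$ (i.i.d.\ $f$) — so their joint probability factors as $\Pr_\infty(T_\FMA > \nu) \cdot \Pr_0(\sum_{i=1}^M \lambda_i < b \mid N=M)$. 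Dividing by $\Pr_\infty(T_\FMA > \nu)$, taking the complement, and averaging with the weights $\pi_k$ (the $k$-sum collapses because the bound does not depend on $k$) yields \eqref{eq:fma:lpd_bound}.

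The main subtlety in both parts is bookkeeping for the calibrated warm-up thresholds: without \eqref{eq:fma_thresholds} the LCPFA bound would fragment into distinct per-window factors as in \eqref{eq:iml_window:lpfa_bound}, and the LPD argument would break down entirely if the single under-change window $[\nu+1, \nu+M]$ ever straddled the warm-up region. Once this point is tracked, both inequalities follow from the same association-plus-independence mechanism used in Lemma~\ref{lemma_3}.
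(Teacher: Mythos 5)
Your proof is correct and follows exactly the route the paper intends: the paper omits the proof of Lemma~\ref{lemma_4}, stating only that it ``follows the same lines'' as Lemma~\ref{lemma_3}, and your argument is precisely that adaptation --- association of the partial sums for \eqref{eq:fma:lpfa_bound} (with the calibration \eqref{eq:fma_thresholds} collapsing the per-window factors into a single $m$-fold power) and the single fully-under-change window $[\nu+1,\nu+M]$ plus independence of disjoint blocks for \eqref{eq:fma:lpd_bound}. The only point glossed over (as in the paper) is the continuity of $H_n$ needed to equate $\Pr_\infty(\sum_{i=1}^n\lambda_i<b_n)$ with $\Pr_\infty(\sum_{i=1}^M\lambda_i<b)$, which is a harmless technicality here.
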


Although the proof follows the same lines as that for window-limited CUSUM, it is worth pointing out several key differences.
If there are values in $\Durations$ smaller than $M$, the lower bound on $\LPD_\pi$ is negatively affected and has a different form than in \eqref{eq:fma:lpd_bound}.
Indeed, the worst-case performance is then determined by the shorter changes where partial sums contain terms from both under-change and pre-change modes.
For this reason, we strongly recommend setting $M = \inf \Durations$.
Finally, we would like to mention that bounds similar to \eqref{eq:fma:lpfa_bound} and \eqref{eq:fma:lpd_bound} can be obtained for the version of FMA where all thresholds are the same, i.e., $b_n = b$ for all $n \geq 1$.

%%Sec
\section{Numerical study and simulations} \label{sec:simulations}

In this section, we carry out a comparative analysis of several detection rules. The four procedures of interest are: CUSUM \eqref{eq:cusum:time}; window-limited CUSUM \eqref{eq:iml_window:time}; classical FMA \eqref{eq:fma:time} with unadjusted thresholds; and modified FMA (further everywhere mFMA) with thresholds adjusted according to \eqref{eq:fma_thresholds}.

Recall the maximum likelihood-based nature of the rules we consider (see Section~\ref{sec:candidates}).
Since window-limited CUSUM assumes that an upper bound on the change duration is known, in the context of this work it makes sense to set its window size equal to the maximum of all possible change duration values, i.e., $M = \sup \Durations$.
FMA rules, on the other hand, arise from the assumption that the change duration is known.
Furthermore, Lemma~\ref{lemma_4} suggests that its probability of detection might be significantly affected when the actual change duration is smaller than the assumed putative value.
For that reason, we set the window size equal to the smallest possible change duration value, i.e., $M = \inf \Durations$.
In Subsection \ref{subsec:sim:performance}, we explore how reasonable the proposed choice of window size is.

Our task is two-fold.
\begin{itemize}
    \item In Subsection \ref{subsec:sim:performance}, we compare how the probability of detection \eqref{eq:class:lpfa_conditional} vs.\ the probability of false alarm \eqref{eq:lpd:conditional} varies between the procedures in question.
        In addition to the comparative analysis of four detection procedures, we check how accurate the theoretical bounds from Lemma \ref{lemma_3} and Lemma \ref{lemma_4} are for the operating characteristics of window-limited CUSUM and mFMA.

    \item In Subsection \ref{subsec:sim:arl}, we consider whether the distribution of the stopping times under $\Pr_\infty$ for each rule is close to exponential. To do this, we examine the QQ-plots for all the rules. If $\Pr_\infty$-distribution of all rules is approximately exponential, we can use the approximation \eqref{PFAvsARL} relating $\LCPFA_{m}$ to $\ARL$. We check how accurate this approximation is for each detection rule.
        We also consider two approximations for the ARL of the classical FMA: Lai's asymptotic approximation (\cite{lai-as74}) and Noonan and Zhigljavsky's continuous-time approximation (\cite{Noonan+Zhigljavsky:2020, Noonan+Zhigljavsky:2021}), which does not claim to be asymptotic. Thus, in addition to assessing the accuracy of said approximations, we also investigate the asymptotic behavior of the two.
\end{itemize}

%%_______________________
\subsection{Performance analysis} \label{subsec:sim:performance}

In this section, we consider the case where the standard Gaussian i.i.d.\ sequence undergoes a shift in mean of $1$. Specifically,
\begin{align*}
    Y_t &\sim \cN(0, 1) \qquad \text{for $t \leq \nu$}, \\
    Y_t &\sim \cN(1, 1) \qquad \text{for $\nu + 1 \leq t \leq \nu + N$}, \\
    Y_t &\sim \cN(0, 1) \qquad \text{for $t < \nu + N$}.
\end{align*}
It is worth noting that this model has been verified for near-Earth space informatics when it is necessary to identify streaks of low observable space objects with telescopes in plain images that appear and disappear at unknown points in time and space \cite{TartakovskyetalIEEESP2021}.

We consider two cases for the potential change duration:
\begin{enumerate}
    \item varying from $5$ to $10$, i.e., $N \in \Durations = \smallset{5, 6, \cdots, 10}$; and
    \item varying from $7$ to $15$, i.e., $N \in \Durations = \smallset{7, 8, \cdots, 15}$.
\end{enumerate}
In the first case, we set $\LCPFA$ window to $m = 10$ and in the second case $m = 15$. In both cases,
we assume a uniform prior $\pi$ in $\LPD$ \eqref{eq:lpd:conditional}.

To obtain the operating characteristics of CUSUM we use the integral equations framework described above with $10^4$--by--$10^4$ matrices, which provides extreme precision.
Detection statistics of other procedures do not allow for recursive expression, so we fall back to Monte Carlo simulations.
For a valid comparison of the rules, the threshold for each procedure is chosen so that
\[
    \LCPFA_m(T_\CUSUM) \approx \LCPFA_m(T_\WLCUSUM) \approx \LCPFA_m(T_\FMA) \approx \LCPFA_m(T_{\MFMA}).
\]

When estimating $\LCPFA_m (T)$ we must find out where the supremum of $\Pr_\infty(T \leq \ell + m \mid T > \ell)$ as a function of $\ell$ is attained. Figure~\ref{fig:LPFA_vs_ell} shows $\Pr_\infty(T \leq \ell + m \mid T > \ell)$ as a function of $\ell$ for all four rules for the case $\Durations = \smallset{5, 6, \cdots, 10}$ and $m = 10$.
Note that for the CUSUM and the WL CUSUM algorithms the maximum is attained in the quasi-stationary mode, as expected.
For the FMA algorithm it peaks at $M - 1$, while for mFMA this is no longer the case, although its maximum is attained early on.

\begin{figure}
    \centering
    \includegraphics[width=\textwidth]{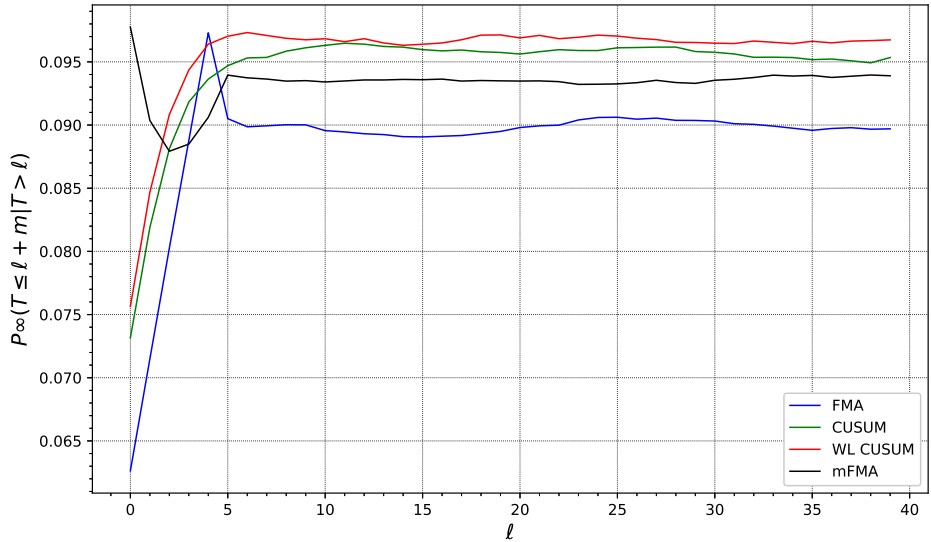}
    \caption{The PFA $\Pr_\infty(T \leq \ell + m \mid T > \ell)$ vs $\ell$ with $m = 10$.}
    \label{fig:LPFA_vs_ell}
\end{figure}

For each rule $T$ in question (except for CUSUM), for estimating $\LCPFA_m (T)$ we use Monte Carlo simulations to estimate its (unconditional) pmf.
To ensure accuracy, we run the simulations until at least $1000$ observations have been collected for each of the events $\set{T = j}$, $j \leq 30$.
Clearly,
\[
    \hat{p}_j = \frac{1}{K} \sum_{k = 1}^{K} \One \set{T_k > j}
\]
are unbiased estimators of $\Pr_\infty(T > j)$. We use $\hat{p}_{j + m} / \hat{p}_j$ to estimate $\Pr_\infty(T > j + m \mid T > j)$.
We use second (resp.\ first) order Taylor approximation of $\hat{p}_{j + m} / \hat{p}_j$ about $\Pr_\infty(T > j + m \mid T > j)$ to estimate its expected value (resp.\ variance).
Since $\Cov_\infty(\hat{p}_j, \hat{p}_{j + m}) = \Pr_\infty(T > j + m) \Pr_\infty(T \leq j) / K$ these approximations yield
\begin{align*}
    \EV_\infty \Big(\frac{\hat{p}_{j + m}}{\hat{p}_j}\Big) &\approx \Pr_\infty(T > j + m \mid T > j), \\
    \Var_\infty \Big(\frac{\hat{p}_{j + m}}{\hat{p}_j}\Big) &\approx \frac{\Pr_\infty(T > j + m \mid T > j) \, \Pr_\infty(T \leq j + m \mid T > j)}{K \, \Pr_\infty(T > j)},
\end{align*}
so we do not need to correct for bias and have a handle on the standard error.
The standard error varies depending on the values of $\LCPFA_m (T)$.
However, the relative values are almost the same and for different procedures do not exceed the following values:
\begin{itemize}
    \item for the first case: WL CUSUM -- $1 \%$, FMA -- $1 \%$, mFMA -- $0.9 \%$; and
    \item for the second case: WL CUSUM -- $0.8 \%$, FMA -- $0.8 \%$, mFMA -- $0.7 \%$.
\end{itemize}

To estimate $\LPD_\pi$ we must find out at what $\nu$ in the equation \eqref{eq:lpd:conditional} the infimum is attained.
Simulations (for WL CUSUM and FMA) and numerical analysis (for CUSUM) suggest that minimum in \eqref{eq:lpd:conditional} is reached when $\nu = 0$ for all procedures except for mFMA, for which it is attained before $\nu = 5$ in both scenarios (see Figure~\ref{fig:LPD_vs_nu} for the case $\Durations = \smallset{5, 6, \cdots, 10}$).

\begin{figure}[h]
    \centering
    \begin{subfigure}{0.49\columnwidth}
        \includegraphics[width=\textwidth]{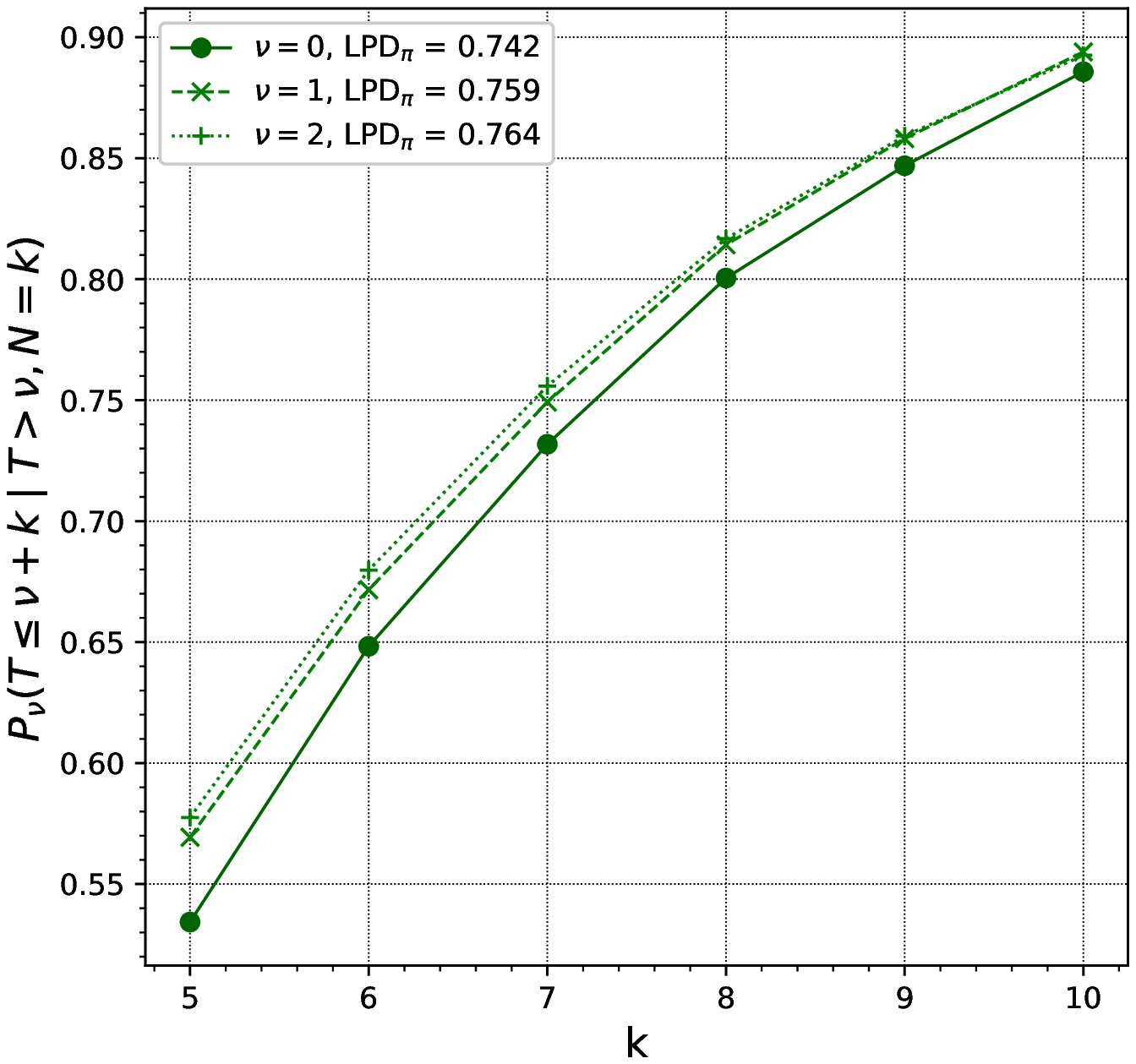}
        \caption{CUSUM}
    \end{subfigure}
    \begin{subfigure}{0.49\columnwidth}
        \includegraphics[width=\textwidth]{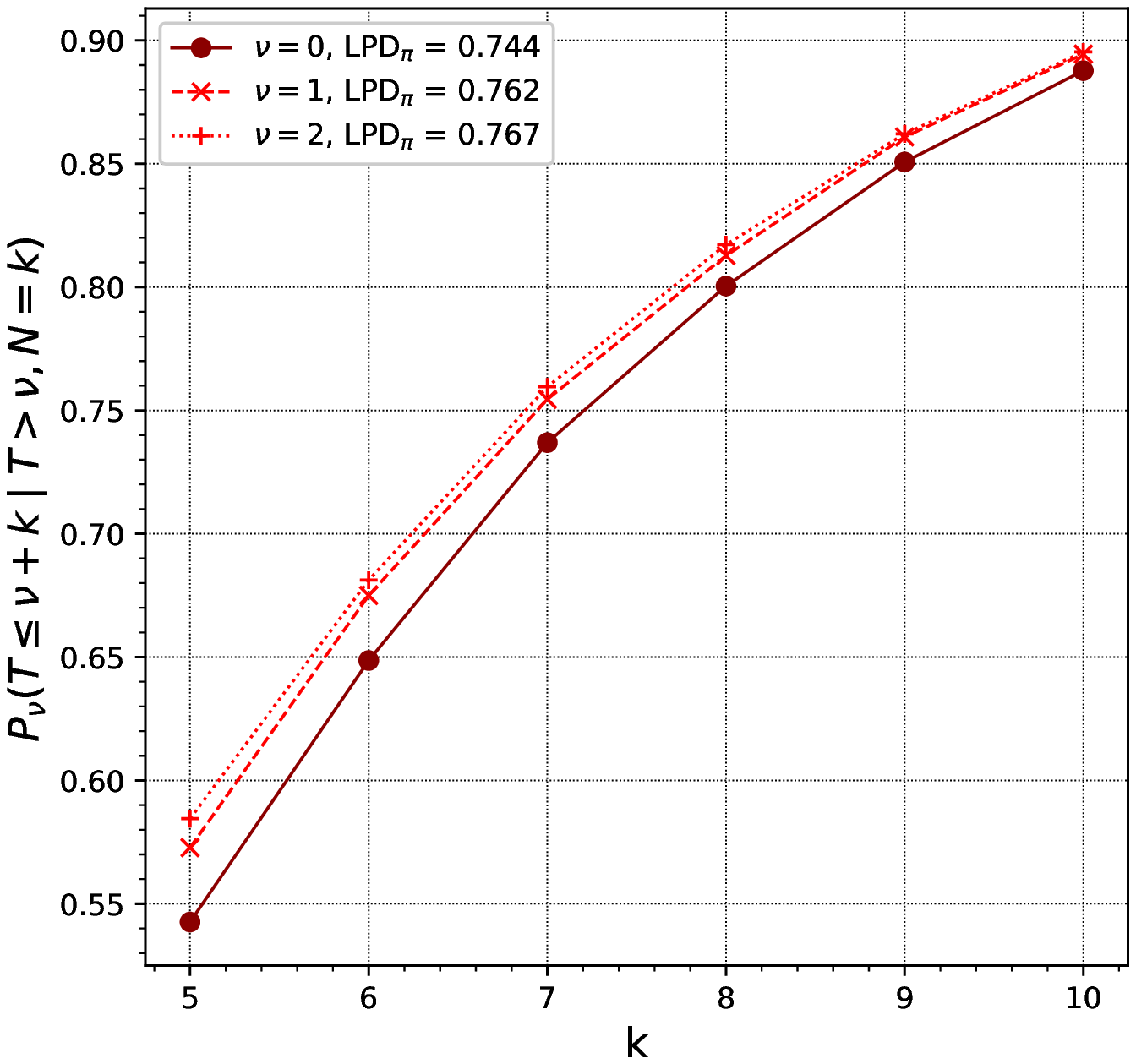}
        \caption{WL CUSUM}
    \end{subfigure}
    \begin{subfigure}{0.49\columnwidth}
        \includegraphics[width=\textwidth]{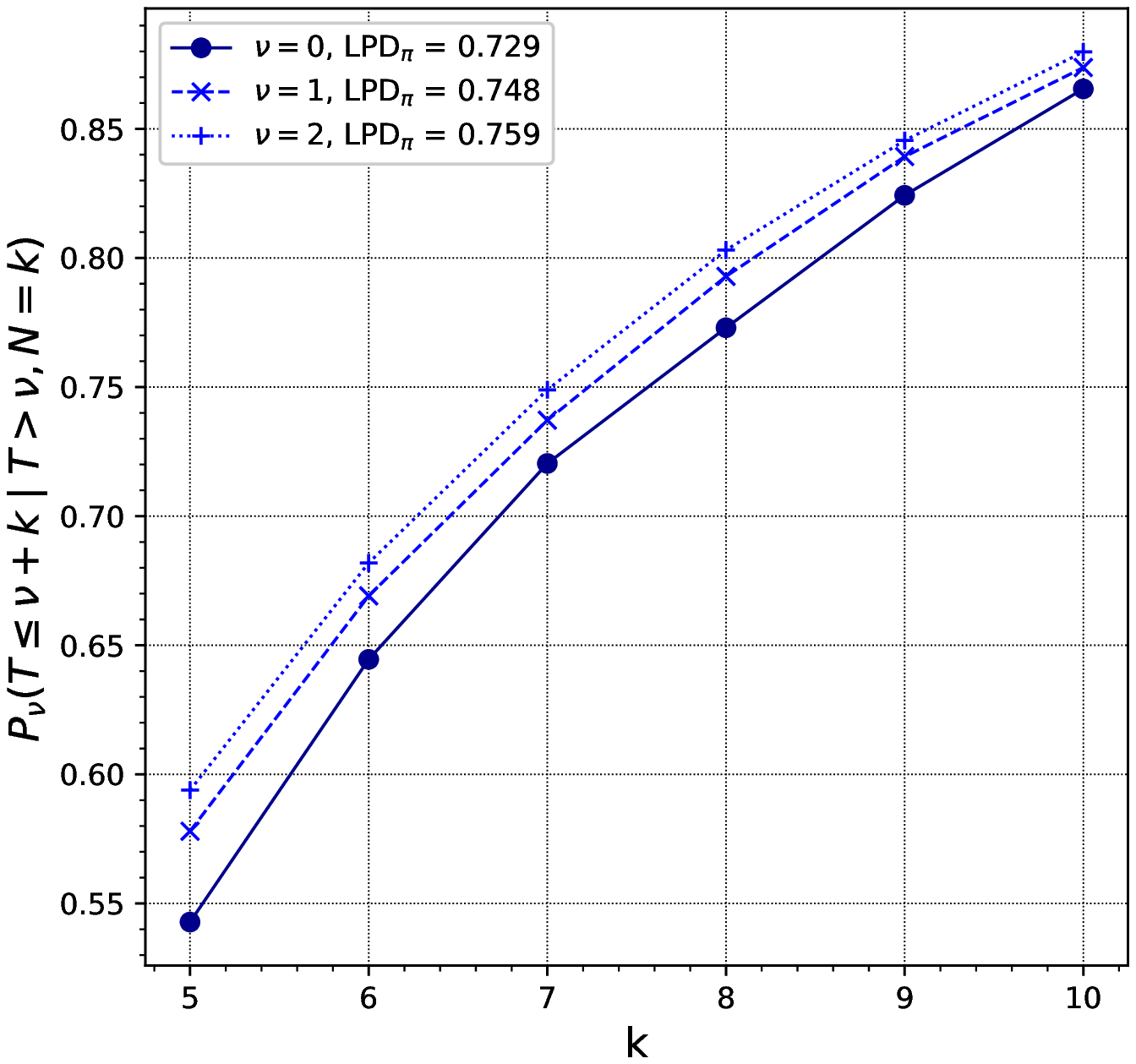}
        \caption{FMA}
    \end{subfigure}
    \begin{subfigure}{0.49\columnwidth}
        \includegraphics[width=\textwidth]{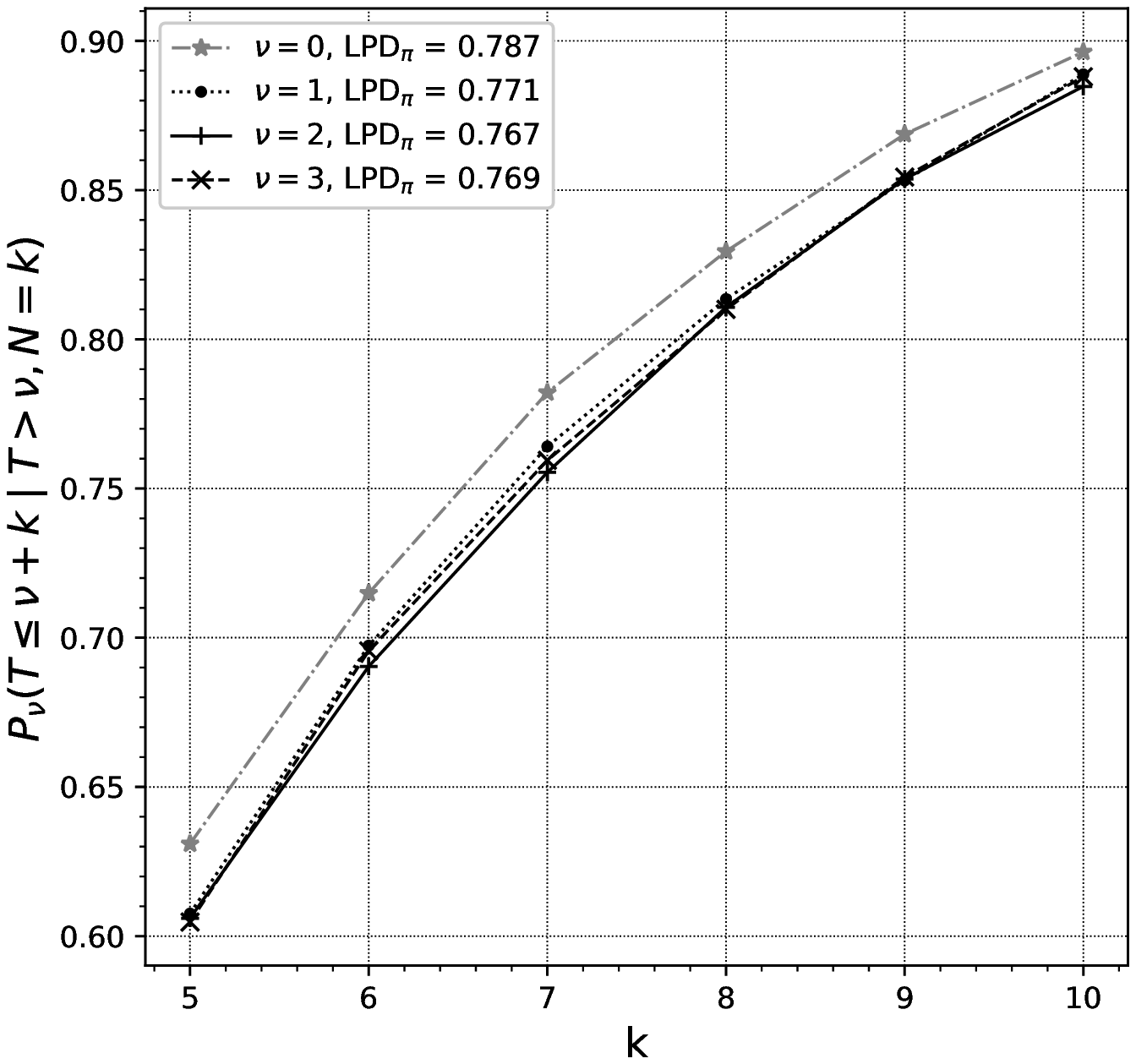}
        \caption{mFMA}
    \end{subfigure}
    \caption{Probability $\Pr_{\nu} (T \leq \nu + k \mid T > \nu, N = k)$ vs $k$ for the four detection rules with different $\nu$ when $\LCPFA_m \approx 0.1$.}
    \label{fig:LPD_vs_nu}
\end{figure}

Figure~\ref{fig:LPD_vs_k} illustrates how the minimal value $\inf_{\nu}\Pr_{\nu} (T \leq \nu + k \mid T > \nu, N = k)$ depends on $k$ for all algorithms when $\Durations = \smallset{5, 6, \cdots, 10}$ and $\LCPFA_m \approx 0.05$.
For CUSUM, WL CUSUM, and FMA the minimal value of the probability $\Pr_{\nu} (T \leq \nu + k \mid T > \nu, N = k)$ is attained at $\nu = 0$ while for mFMA it is attained at $\nu = 2$.
The WL CUSUM procedure performs better than CUSUM.
The mFMA performs better than the classical FMA.
Not surprisingly, FMA and mFMA (which are by design tuned to $\inf \Durations$) perform significantly better than its competitors at low values of the change duration, while at larger values of the change duration WL CUSUM and CUSUM take the lead.
\begin{figure}
    \centering
    \includegraphics[width=\textwidth]{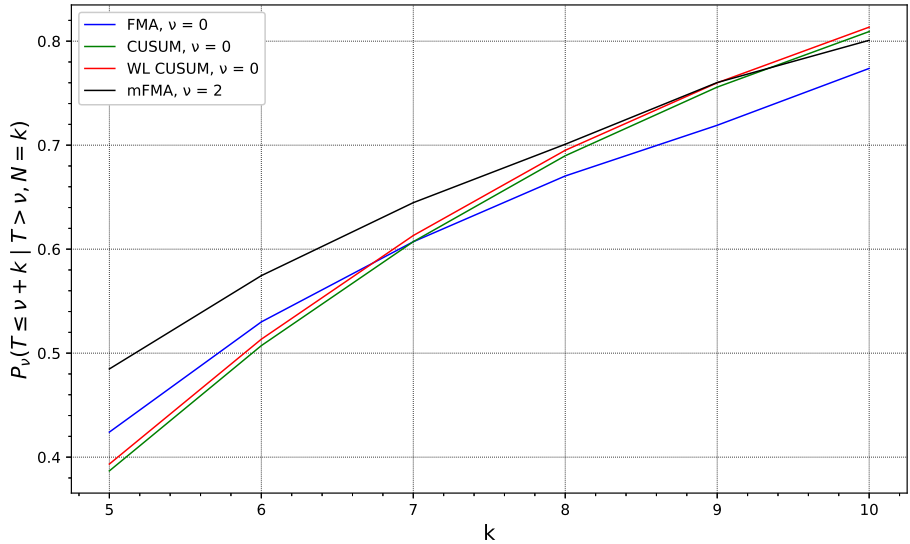}
    \caption{Comparison of $\Pr_{\nu} (T \leq \nu+k \mid T>\nu, N = k)$ vs $k$ for FMA, mFMA, CUSUM, WL CUSUM with $\LCPFA_m \approx 0.05$ for the case $\Durations = \smallset{5, 6, \cdots, 10}$. For each procedure $\nu$ is chosen where the infimum in \eqref{eq:lpd:conditional} is reached.}
    \label{fig:LPD_vs_k}
\end{figure}

The comparison of all rules showing $\LPD$ as a function of $\LCPFA$ for the case $\Durations = \smallset{5, 6, \cdots, 10}$ is presented in Table~\ref{t:LPD_vs_LCPFA} and Figure~\ref{fig:LPD_vs_LCPFA_log_scale_5_10}.
Here $\SE (T_\WLCUSUM)$, $\SE (T_\FMA)$, and $\SE (T_{\MFMA})$ are standard errors when evaluating $\LPD_\pi$ for the three procedures.

\begin{table}
    \newcommand{\ten}[2]{$#1 \cdot\! 10^{#2}$}
    \tbl{Operating characteristics of the WL CUSUM, CUSUM and FMA algorithms for $\Durations = \smallset{5, 6, \cdots, 10}$ .}
    {\begin{tabular}{@{} l c c c c c c c @{}} \toprule
        $\LCPFA_{m}$ & $10^{-1}$  & \ten{5}{-2} & \ten{2}{-2} & $10^{-2}$ & \ten{5}{-3} & $10^{-3}$ & $10^{-4}$ \\
        \midrule
        $\LPD_\pi(T_\WLCUSUM)$         & 0.7444 & 0.6350 & 0.4970 & 0.3950 & 0.3139 & 0.1730 & 0.0639 \\
        $\SE (T_\WLCUSUM)$             & 0.0013 & 0.0016 & 0.0017 & 0.0016 & 0.0015 & 0.0010 & 0.0005 \\
        \midrule
        $\LPD_\pi^{ie}(T_\CUSUM)$      & 0.7415 & 0.6326 & 0.4769 & 0.3655 & 0.2794 & 0.1290 & 0.0305 \\

        \midrule
        $\LPD_\pi(T_\FMA)$             & 0.7291 & 0.6214 & 0.4719 & 0.3841 & 0.2977 & 0.1558 & 0.0514 \\
        $\SE (T_\FMA)$                 & 0.0014 & 0.0016 & 0.0018 & 0.0017 & 0.0014 & 0.0009 & 0.0003 \\
        \midrule
        $\LPD_\pi(T_{\MFMA)}$          & 0.7672 & 0.6631 & 0.5126 & 0.4181 & 0.3258 & 0.1666 & 0.0556 \\
        $\SE (T_{\MFMA})$              & 0.0012 & 0.0016 & 0.0018 & 0.0018 & 0.0015 & 0.0009 & 0.003  \\ \bottomrule
    \end{tabular}}
    \label{t:LPD_vs_LCPFA}
\end{table}
\begin{figure}
    \centering
    \includegraphics[width=\textwidth]{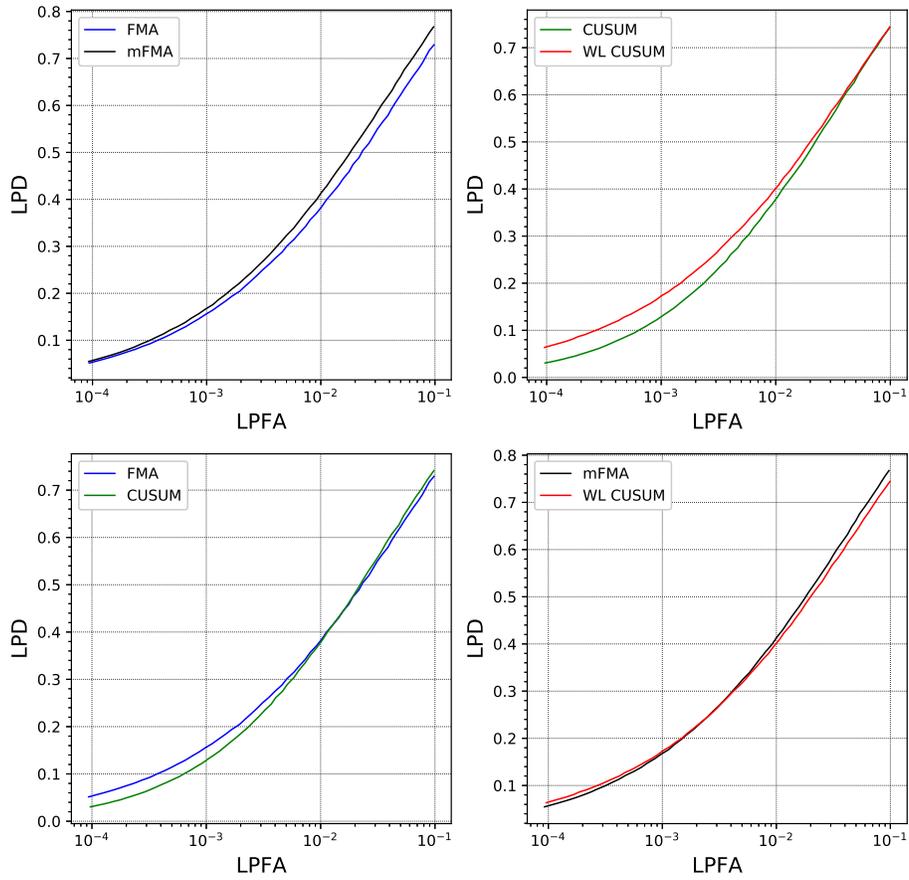}
    \caption{Comparison of operating characteristics ($\LPD$ vs $\LCPFA$) of the FMA, mFMA, CUSUM and WL CUSUM procedures for $\Durations = \smallset{5, 6, \cdots, 10}$; horizontal axis log-scale.}
    \label{fig:LPD_vs_LCPFA_log_scale_5_10}
\end{figure}

Figure~\ref{fig:LPD_vs_LCPFA_log_scale_5_10} shows that the window-limited CUSUM procedure performs better than the classic CUSUM procedure.
Moreover, the difference increases when $\LCPFA \to 0$. It also shows that the mFMA procedure performs better than the classic FMA.
However, as expected, the difference decreases when $\LCPFA \to 0$. The mFMA procedure performs much better than its competitors at not quite small $\LCPFA$ values, while the window-limited CUSUM procedure performs significantly better than its competitors at $\LCPFA \to 0$.
However, when $\LCPFA$ tends to zero, the values of $\LPD$ are rather poor for practical purposes.
More often in practice, the duration of the change is longer, as in the second case where $N \in \Durations = \smallset{7, 8, \cdots, 15}$.

The comparison of all rules showing $\LPD$ as a function of $\LCPFA$ for $\Durations = \smallset{7, 8, \cdots, 15}$ is presented in Table~\ref{t:LPD_vs_LCPFA_7_15} and Figure~\ref{fig:LPD_vs_LCPFA_log_7_15_scale}.
\begin{table}
    \newcommand{\ten}[2]{$#1 \cdot\! 10^{#2}$}
    \tbl{Operating characteristics of the WL CUSUM, CUSUM and FMA algorithms for $\Durations = \smallset{7, 8, \cdots, 15}$.}
    {\begin{tabular}{@{} l c c c c c c c @{}} \toprule
        $\LCPFA_{m}$ & $10^{-1}$  & \ten{5}{-2} & \ten{2}{-2} & $10^{-2}$ & \ten{5}{-3} & $10^{-3}$ & $10^{-4}$ \\
        \midrule
        $\LPD_\pi(T_\WLCUSUM)$         & 0.8549 & 0.7829 & 0.6770 & 0.5842 & 0.5129 & 0.3250 & 0.1629 \\
        $\SE (T_\WLCUSUM)$             & 0.0007 & 0.0010 & 0.0012 & 0.0013 & 0.0013 & 0.0012 & 0.0008 \\
        \midrule
        $\LPD_\pi^{ie}(T_\CUSUM)$      & 0.8551 & 0.7812 & 0.6676 & 0.5738 & 0.4953 & 0.3167 & 0.1370 \\

        \midrule
        $\LPD_\pi(T_\FMA)$             & 0.8514 & 0.7680 & 0.6528 & 0.5552 & 0.4716 & 0.2824 & 0.1205 \\
        $\SE (T_\FMA)$                 & 0.0007 & 0.0010 & 0.0013 & 0.0014 & 0.0014 & 0.0011 & 0.0006 \\
        \midrule
        $\LPD_\pi(T_{\MFMA})$          & 0.8734 & 0.7945 & 0.6797 & 0.5813 & 0.4947 & 0.2962 & 0.1262 \\
        $\SE (T_{\MFMA})$              & 0.0007 & 0.0009 & 0.0012 & 0.0014 & 0.0015 & 0.0012 & 0.0046 \\ \bottomrule
    \end{tabular}}
    \label{t:LPD_vs_LCPFA_7_15}
\end{table}
\begin{figure}[!h!]
    \centering
    \includegraphics[width=\textwidth]{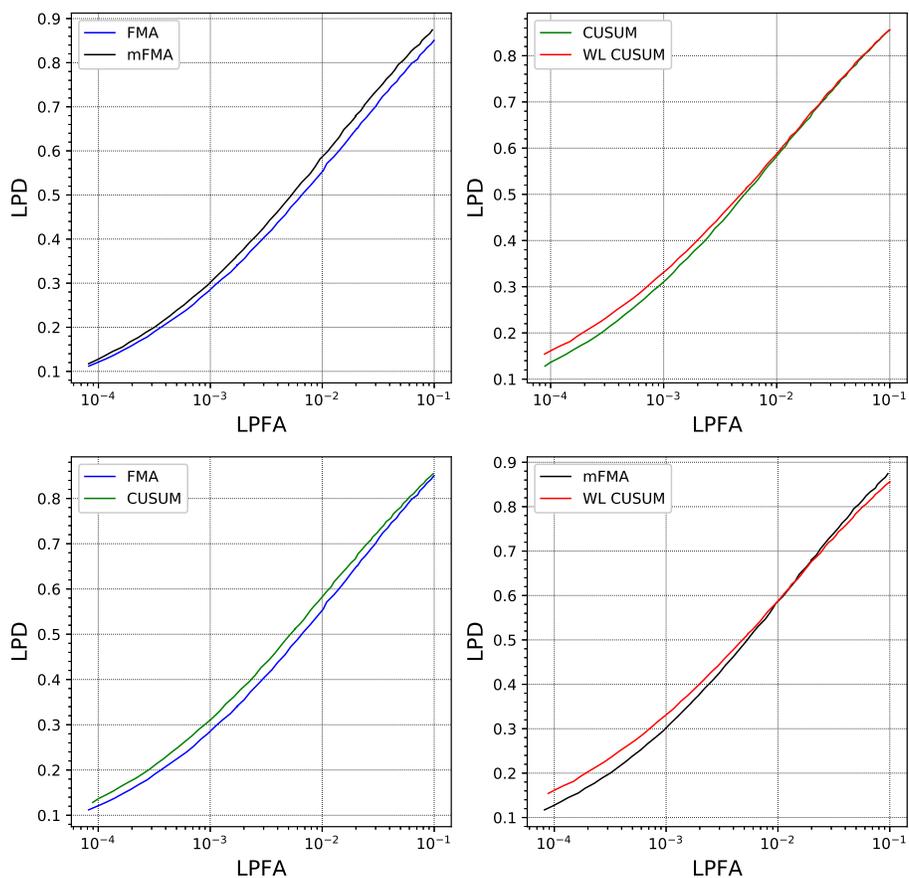}
   \caption{Comparison of operating characteristics ($\LPD$ vs $\LCPFA$) of the FMA, mFMA, CUSUM and WL CUSUM for $\Durations = \smallset{7, 8, \cdots, 15}$; horizontal axis log-scale.}
    \label{fig:LPD_vs_LCPFA_log_7_15_scale}
\end{figure}

The conclusions that were made in the first case hold for other values of the anomaly duration. However, for more realistic values of the change duration, the $\LPD$ values are noticeably higher.

To corroborate the recommended choice of window size (see Section \ref{sec:candidates}) for window-limited CUSUM (maximum of all possible change duration values, i.e., $M = \sup \Durations$) and for mFMA (smallest possible change duration values, i.e., $M = \inf \Durations$), we performed a numerical comparison of the performance ($\LPD$ vs $\LCPFA$) of the considered rules tuned to different window size $M$.
Figure~\ref{fig:LPD_vs_LPFA_for_dif_wl_mFMA_and_WL_CUS} shows that the window-limited CUSUM procedure, for which the window size is equal to the maximum of all possible change duration values, i.e., $M = 10$, performs better than the window-limited CUSUM procedure, for which the window size is $M = 5$.
In contrast, the mFMA procedure, for which the window size is equal to the smallest possible change duration value, i.e., $M = 5$, performs better than the mFMA procedure, for which the window size is equal to $M = 10$.

\begin{figure}
    \centering
    \includegraphics[width=\textwidth]{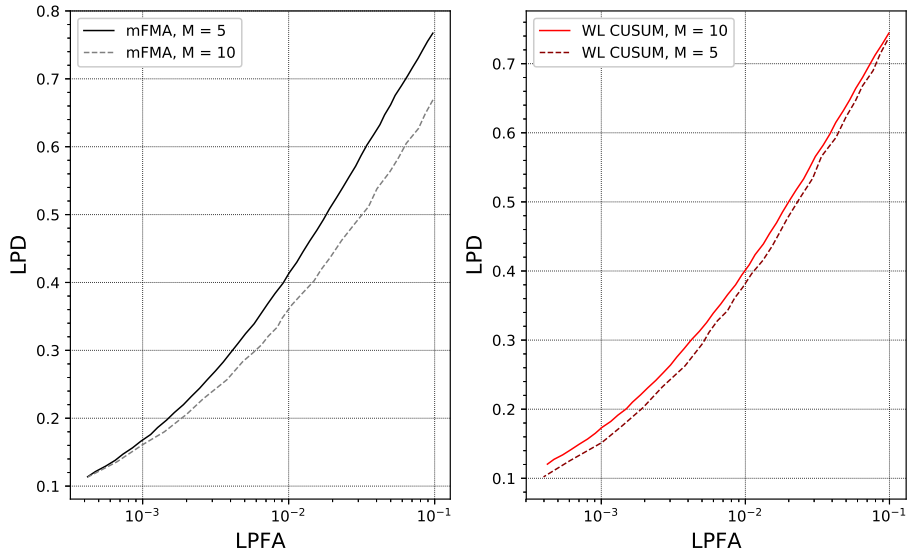}
    \caption{Performance comparison ($\LPD$ vs $\LCPFA$) for mFMA and WL CUSUM tuned to different window sizes $M$ for the case where change duration varies $\Durations = \smallset{5, 6, \cdots, 10}$.}
    \label{fig:LPD_vs_LPFA_for_dif_wl_mFMA_and_WL_CUS}
\end{figure}

The results of the comparison operating characteristics of WL CUSUM and mFMA obtained by using Lemma \ref{lemma_3} and Lemma \ref{lemma_4} (theoretical bounds for the operating characteristics) and by Monte Carlo simulations are presented in Table~\ref{t:WL_CUS_tb_vs_WL_CUS_MC} and Table~\ref{t:mFMA_tb_vs_mFMA_MC}, respectively. As expected, the upper bounds are very conservative.

\begin{table}[h]
    \newcommand{\ten}[2]{$#1 \cdot\! 10^{#2}$}
    \tbl{Operating characteristics for WL CUSUM (theoretical bounds vs MC simulations).}
    {\begin{tabular}{@{} l c c c c c @{}} \toprule
        \multicolumn{6}{c}{Upper bound for LPFA} \\
        \midrule
        %Threshold $b$ & $2.85$  & $3.50$ & $4.35$ & $5.00$ & $5.65$ \\
        %\midrule
        theoretical bounds        & 0.4724 & 0.2507 & 0.0939 & 0.0413 & 0.0174 \\
        MC simulations            & 0.0999 & 0.0497 & 0.0195 & 0.0096 & 0.0049 \\
        \midrule
        \multicolumn{6}{c}{Lower bound for LPD} \\
        \midrule
        theoretical bounds        & 0.612 & 0.521 & 0.403 & 0.320 & 0.246 \\
        MC simulations            & 0.744 & 0.635 & 0.490 & 0.389 & 0.304 \\ \bottomrule
    \end{tabular}}
    \label{t:WL_CUS_tb_vs_WL_CUS_MC}
\end{table}
\begin{table}[!h!]
    \newcommand{\ten}[2]{$#1 \cdot\! 10^{#2}$}
    \tbl{Operating characteristics for mFMA (theoretical bounds vs MC simulations).}
    {\begin{tabular}{@{} l c c c c c @{}} \toprule
        %Threshold $b$ & $2.215$  & $2.85$ & $3.65$ & $4.2$ & $4.67$ \\
        %\midrule
        \multicolumn{6}{c}{Upper bound for LPFA} \\
        \midrule
        theoretical bounds        & 0.1617 & 0.0806 & 0.0294 & 0.0136 & 0.0067 \\
        MC simulations            & 0.0985 & 0.0493 & 0.0191 & 0.0097 & 0.0049 \\
        \midrule
        \multicolumn{6}{c}{Lower bound for LPD} \\
        \midrule
        theoretical bounds        & 0.551 & 0.438 & 0.304 & 0.224 & 0.166 \\
        MC simulations            & 0.767 & 0.664 & 0.514 & 0.407 & 0.321 \\ \bottomrule
    \end{tabular}}
    \label{t:mFMA_tb_vs_mFMA_MC}
\end{table}

%%%%%Subsec
\subsection{Asymptotic exponentiality of stopping times} \label{subsec:sim:arl}

We now investigate whether the distribution of the stopping times of change detection procedures under $\Pr_\infty$ (i.e., under the no-change hypothesis) is close to exponential.
We use QQ plots to assess whether this is the case. Specifically, we plot the empirical quantiles of the observed stopping time against the theoretical quantiles of the geometric distribution with parameter $K/\sum_{i = 1}^K T_i$, where $T_i$'s are the generated stopping times.
To accomplish this we perform Monte Carlo simulations with $10^{7}$ runs for each of the four detection rules with thresholds chosen so that the $\ARL$ of each detection procedure is approximately equal to $200$.
Window-limited CUSUM and FMA are configured as in our first scenario, i.e., to detect a change of duration from $5$ to $10$. For each Monte Carlo run, we get the stopping time assuming the change never occurs.
The QQ plots in Figure~\ref{fig:QQ_plots} suggest that the distributions are indeed close to geometric even for moderate values of the ARL to false alarm.
\begin{figure}
    \centering
    \begin{subfigure}{0.49\columnwidth}
        \includegraphics[width=\textwidth]{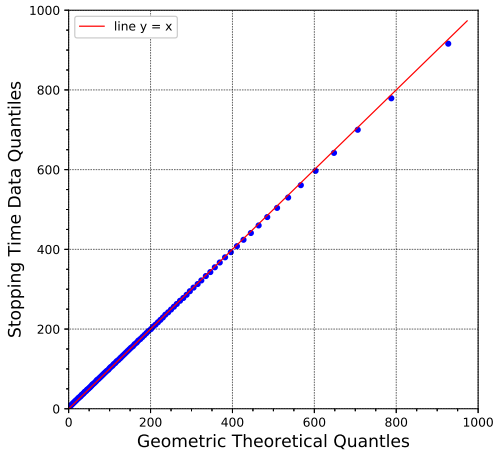}
        \caption{FMA}
    \end{subfigure}
    \begin{subfigure}{0.49\columnwidth}
        \includegraphics[width=\textwidth]{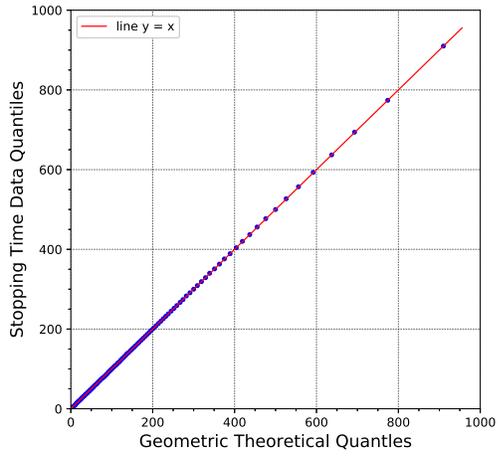}
        \caption{Modified FMA}
    \end{subfigure}
    \begin{subfigure}{0.49\columnwidth}
        \includegraphics[width=\textwidth]{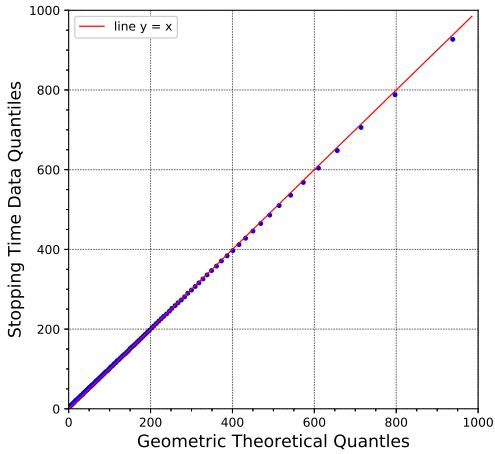}
        \caption{CUSUM}
    \end{subfigure}
    \begin{subfigure}{0.49\columnwidth}
        \includegraphics[width=\textwidth]{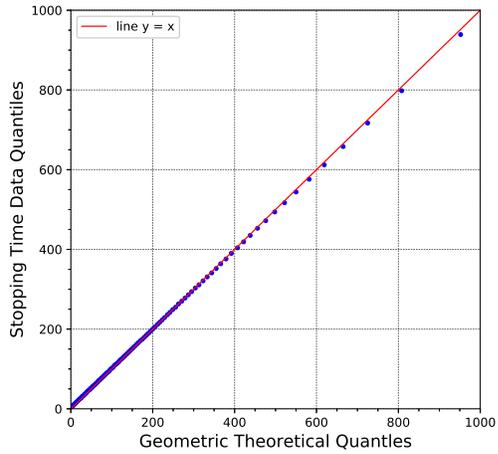}
        \caption{WL CUSUM}
    \end{subfigure}

    \caption{Quantile-Quantile (QQ) plots for FMA, Modified FMA, CUSUM, WL CUSUM when $\ARL \approx 200$. The $x$-axis shows the theoretical quantiles of the geometric distribution with parameter $\approx 1/200$ and the $y$-axis shows the quantiles of distributions of the observed stopping times for each of the four detection rules.}
    \label{fig:QQ_plots}
\end{figure}
Hence, the following approximation to $\LCPFA_{m}$ may be used:
\begin{align} \label{eq:LCPFA_ARL}
    %\LCPFA_{m}(T) \approx
    \LCPFA_{exp}(T) = 1 - \Bigg(1 - \frac{1}{\ARL(T)}\Bigg)^{m}.
\end{align}
This allows one to estimate $\LCPFA$ by using $\ARL$, which significantly reduces the computational burden associated with MC simulations in Section~\ref{subsec:sim:performance}.
Comparison of $\LCPFA_{exp}$ obtained using approximation \eqref{eq:LCPFA_ARL} and simulated $\ARL$ with MC estimate $\LCPFA_{mc}$, as described in Section~\ref{subsec:sim:performance}, is presented in Table~\ref{t:LCPFA_mc_vs_LCPFA_th}.
As can be seen, the approximation \eqref{eq:LCPFA_ARL} is very accurate for all change detection rules. Thus, this approximation is useful in most practical problems, simplifying the selection of thresholds.

%%__Table
\begin{table}
    \newcommand{\ten}[2]{$#1 \cdot\! 10^{#2}$}
    \tbl{MC simulations vs asymptotic approximation for $\LCPFA_{m}$.}
    {\begin{tabular}{@{} l c c c c @{}} \toprule
        $\LCPFA_{mc}(T_\WLCUSUM)$   & \ten{9.91}{-2}     & \ten{9.45}{-3}     & \ten{10.1}{-4}     & \ten{9.6}{-5}  \\
        $\LCPFA_{exp}(T_\WLCUSUM)$  & \ten{9.31}{-2}     & \ten{9.31}{-3}     & \ten{10.0}{-4}     & \ten{9.9}{-5}  \\
        Deviation, $\%$             & 6.0                & 1.5                & 0.9                & 3.3            \\
        \midrule
        $\LCPFA_{ie}(T_\CUSUM)$     & \ten{9.79}{-2}     & \ten{10.4}{-3}     & \ten{10.2}{-4}     & \ten{9.7}{-5}  \\
        $\LCPFA_{exp}(T_\CUSUM)$    & \ten{9.45}{-2}     & \ten{10.1}{-3}     & \ten{10.3}{-4}     & \ten{10.8}{-5} \\
        Deviation, $\%$             & 3.5                & 2.3                & 1.0                & 11.8           \\
        \midrule
        $\LCPFA_{mc}(T_\FMA)$       & \ten{9.81}{-2}     & \ten{10.2}{-3}     & \ten{9.70}{-4}     & \ten{9.3}{-5}  \\
        $\LCPFA_{exp}(T_\FMA)$      & \ten{8.78}{-2}     & \ten{9.81}{-3}     & \ten{9.70}{-4}     & \ten{9.7}{-5}  \\
        Deviation, $\%$             & 10.5               & 4.0                & 0.1                & 3.7            \\
        \midrule
        $\LCPFA_{mc}(T_{\MFMA})$    & \ten{9.72}{-2}     & \ten{10.1}{-3}     & \ten{9.60}{-4}     & \ten{9.3}{-5}  \\
        $\LCPFA_{exp}(T_{\MFMA})$   & \ten{9.33}{-2}     & \ten{10.1}{-3}     & \ten{9.70}{-4}     & \ten{9.2}{-5}  \\
        Deviation, $\%$             & 4.1                & 0.2                & 0.9                & 0.7            \\
        \bottomrule
    \end{tabular}}
    \label{t:LCPFA_mc_vs_LCPFA_th}
\end{table}

Next, we study accuracy of two known approximations for $\ARL$ of the classical FMA -- specifically, \citeauthor{lai-as74}'s asymptotic approximation (\cite{lai-as74}) and \citeauthor{Noonan+Zhigljavsky:2020}'s approximation (\cite{Noonan+Zhigljavsky:2020}).

Noonan and Zhigljavsky studied a rule they dubbed the MOSUM test.
In the case of Gaussian observations, this procedure is equivalent to the classical FMA test.
Otherwise, MOSUM compares the observed signal, rather than the log-likelihood ratio process, to a threshold.
Their approximation to the $\ARL$ of $T_\FMA$ is as follows:
\begin{align} \label{eq:ARL_FMA_N_ZH}
    \ARL_{N-Zh}(T_\FMA) = -\frac{M \cdot F(2, h, h_{M})}{\theta_{M}(h)^{2} \log(\theta_{M}(h))} + M,
\end{align}
where $h = b + M/2$, $h_{M} = h + 0.8239/\sqrt{M}$, $b$ is the threshold in \eqref{eq:fma:time}, and
\begin{align*}
    F(1, h, h_{M}) = \Phi(h) \Phi(h_{M}) - \varphi(h_{M}) \left[h \Phi(h) + \varphi(h) \right].
\end{align*}
Here $\Phi$ is the standard normal cdf, $\varphi$ is the standard normal pdf, and
\begin{align*}
    F(2, h, h_{M}) &= \frac{\varphi^{2}(h_{M})}{2} \left[(h^{2} - 1 + \sqrt{\pi} h) \Phi(h) + (h + \sqrt{\pi}) \varphi(h) \right] - \\
        &{} - \varphi(h_{M}) \Phi(h_{M}) \left[(h + h_{M}) \Phi(h) + \varphi(h) \right] + \Phi(h) \Phi^{2}(h_{M}) + \\
        &{} + \int_{0}^{\infty} \Phi(h - x) \left[\varphi(h_{M} + x) \Phi(h_{M} - x) - \sqrt{\pi}\varphi^{2}(h_{M})\Phi(\sqrt{2}x) \right] dx,
\end{align*}
with $\theta_{M}(h) = \slfrac{F(2, h, h_{M})}{F(1, h, h_{M})}$.

Lai's approximation for $\ARL$ of $T_\FMA$ has a much simpler expression:
\begin{align} \label{eq:ARL_FMA_Lai}
    \ARL_{Lai}(T_\FMA) = (1 - \Phi((b + M/2)/\sqrt{M}))^{-1}.
\end{align}
Despite its simplicity, it is asymptotically exact, in that $\ARL_{Lai}(T_\FMA) \to \EV_\infty(T_\FMA)$ as $b \to \infty$.

For both approximations, we used the scenario that was used in the first case when the change duration varies from $5$ to $10$, i.e., $N \in \Durations = \smallset{5, 6, \cdots, 10}$.
We run a Monte Carlo simulation (with $10^{6}$ runs) for various threshold values and look at how simulated results align with the predictions. The results are presented in Table~\ref{t:ARL_mc_vs_ARL_N_Zh_vs_ARL_Lai}.
\begin{table}
    \newcommand{\ten}[2]{$#1 \cdot\! 10^{#2}$}
    \tbl{MC simulations of $\ARL(T_\FMA)$ vs.\ its approximation $\ARL_{N-Zh}$ and $\ARL_{Lai}$.}
    {\begin{tabular}{@{} l c c c c c c c @{}} \toprule
        Threshold              & 2.25     & 2.89      & 3.70      & 4.18       & 4.67     & 5.71    & 7.00   \\
        \midrule
        $\ARL(T_\FMA)$         & 109.63   & 211.47    & 545.50    & 1026.43    & 2032.5   & 10488   & 108960 \\
        \midrule
        $\ARL_{N-Zh}(T_\FMA)$  & 114.11   & 217.36    & 555.88    & 1047.30    & 2077.6   & 10902   & 115490 \\
        Deviation, $\%$        & 4.08     & 2.78      & 1.90      & 2.03       & 2.22     & 3.95    & 5.99   \\
        \midrule
        $\ARL_{Lai}(T_\FMA)$   & 59.44    & 126.17    & 359.36    & 713.09     & 1477.7   & 8325.4  & 92946  \\
        Deviation, $\%$        & 45.78    & 40.34     & 34.12     & 30.53      & 27.29    & 20.62   & 14.70  \\
        \bottomrule
    \end{tabular}}
    \label{t:ARL_mc_vs_ARL_N_Zh_vs_ARL_Lai}
\end{table}
It is clear that Lai's approximation gives poor accuracy for small threshold values. As the threshold increases, the deviation of the estimate by Lai's approximation from the estimate by Monte Carlo simulations decreases, as expected.
Noonan and Zhigljavsky's approximation gives good results: deviation from Monte Carlo simulations does not exceed $6\%$. However, for large thresholds, the deviation of the estimate by Noonan and Zhigljavsky's approximation from the estimate by Monte Carlo simulations increases. From this, we can conclude that, unlike Lai's approximation, Noonan and Zhigljavsky's approximation is not asymptotic.
Thus, using \eqref{eq:LCPFA_ARL} and \eqref{eq:ARL_FMA_N_ZH} may be recommended to a practitioner working in the low $\ARL$ setting, while Lai's approximation \eqref{eq:ARL_FMA_Lai} may be used as a conservative asymptotically exact lower bound for the $\ARL$ of FMA regardless of threshold.

%%Sec
\section{Conclusions and discussion}\label{sec:conclusions}

We provided a review of existing change detection frameworks and performance measures and examined their relevance for detecting intermittent changes.
A thorough discussion as to the differences between these settings prompted us to propose one specific choice of false alarm and correct detection measures.
The formulation handles changes of unknown duration, whether the length of change is deterministic or random.
Furthermore, the transient nature of the signal coupled with the maximum likelihood principle yielded three detection rules that are equivalent to three stopping times popular in literature: CUSUM, window-limited CUSUM, and FMA.
In addition, a particular property when maximizing the likelihood ratio for FMA brought to light a modification of the FMA rule that, to the best of our knowledge, has not been considered before.
The simulation study further supported the conjecture that the modified FMA is superior to the classical one.

We presented ways to design each of the detection rules and performed a comparative numerical analysis between the four.
The window-limited CUSUM procedure shows operating characteristics better than the CUSUM procedure, and the modified FMA procedure performs better than the classic FMA.
For low values of the change duration FMA and modified FMA perform significantly better than their competitors.
The modified FMA performs significantly better than the others.
For the CUSUM procedure, very accurate performance estimates can be obtained by solving integral equations.
For the window-limited CUSUM and the modified FMA, we obtained theoretical bounds on both $\LCPFA$ and $\LPD$.
The former not only allows one to control the false alarm rate and thus choose a threshold that guarantees that the rate does not exceed a prescribed value, but also turned out to be reasonably accurate.
The bound on $\LPD$, however, is typically rough and not recommended for practical purposes.

An important direction of further research would involve the case of deterministic signal duration $N$ and the choice of $\pi$ in \eqref{eq:lpd:conditional}.
One possible motivation could be as follows.
Consider a family of oracle rules $\smallcset{T_{N^\star}}{N^\star \geq 1}$ that are tuned to a particular change duration $N^\star$, i.e., solve
\[
    \sup _{T \in \class (m, \alpha )} \inf_{\nu \geq 0} \Pr_{\nu} (T \leq \nu + N^\star \mid T > \nu, N = N^\star).
\]
Then $\pi$ should be chosen so that compared to \emph{any} oracle rule $T_{N^\star}$, $N^\star \in \Durations$, the loss in performance of a rule satisfying \eqref{eq:optimal_time} should remain bounded as $\alpha \to 0$ with $m$ either fixed or $m = m(\alpha) \to \infty$ at a certain rate.

Another interesting avenue of research would be to investigate the connection between formulation \eqref{eq:lpd:lorden}, \eqref{eq:optimal_time} and formulation \eqref{eq:lpd:conditional}, \eqref{eq:optimal_time} in the asymptotic setting.

%\section*{Acknowledgement}
%
%We are grateful to the reviewers and the associate editor whose comments substantially improved the paper.

%\bibliographystyle{tfcad}
%\bibliography{intermittent.bib}

\end{document}